\theoremstyle{plain}
\newtheorem{corollary}{Corollary}
\newtheorem{definition}{Definition}
\newtheorem{example}{Example}
\newtheorem{lemma}{Lemma}
\newtheorem{proposition}{Proposition}
\numberwithin{equation}{section}
\newcommand\N{\mathcal{N}}
\newcommand\BH{\mathcal{B}(\cH)}
\newcommand\cM{\mathcal{M}}
\newcommand\PN{\mathcal{P}(\N)}
\newcommand\hA{{\hat A}}
\newcommand\bbR{\mathbb{R}}
\newcommand\bbC{\mathbb{C}}
\newcommand\ra{\rightarrow}
\newcommand\lra{\longrightarrow}
\newcommand\lmt{\longmapsto}
\newcommand\bmeet{\bigwedge}
\newcommand\bjoin{\bigvee}
\newcommand\hP{{\hat P}}
\newcommand\hQ{\hat Q}
\newcommand\cH{\mathcal{H}}
\newcommand\mc[1]{\mathcal{#1}}
\newcommand\deo{\delta^o}
\newcommand\on[1]{\operatorname{#1}}
\newcommand\ps[1]{\underline{#1}}
\newcommand\Sub{\on{Sub}}
\newcommand\Subcl{\on{Sub}_{\on{cl}}}
\newcommand\Sig{\ps{\Sigma}}
\newcommand\VN{\mc{V}(\N)}
\newcommand\meet{\wedge}
\newcommand\join{\vee}
\newcommand\Set{\mathbf{Set}}
\newcommand\PV{\mc{P}(V)}
\newcommand\Ga{\Gamma}
\newcommand\De{\Delta}
\newcommand\Ain[1]{A\,\varepsilon\,#1}
\newcommand\true{\on{true}}
\newcommand\false{\on{false}}
\newcommand\ld{\lambda}
\newcommand\Cp{\mc{C}p}
\newcommand\SetC[1]{\Set^{{#1}^{\rm op}}}
\newcommand\SetVNop{\SetC{\mathcal{V(N)}}}
\newcommand\op{\operatorname{op}}
\begin{document}
\title[Topos-Based Logic for Quantum Systems and Bi-Heyting Algebras]{Topos-Based Logic for Quantum Systems\\and Bi-Heyting Algebras}

\author{Andreas D\"oring}
\address{Andreas D\"oring\newline
\indent Clarendon Laboratory\newline
\indent Department of Physics\newline%
\indent University of Oxford\newline
\indent Parks Road\newline
\indent Oxford OX1 3PU, UK}%
\email{doering@atm.ox.ac.uk}%
\date{December 5, 2013}
\subjclass{Primary 81P10, 03G12; Secondary 06D20, 06C99} %
\keywords{Bi-Heyting algebra, Topos, Quantum, Logic}%

\begin{abstract}
To each quantum system, described by a von Neumann algebra of physical quantities, we associate a complete bi-Heyting algebra. The elements of this algebra represent contextualised propositions about the values of the physical quantities of the quantum system.
\end{abstract}
\maketitle

\section{Introduction}			\label{Sec_Introd}
Quantum logic started with Birkhoff and von Neumann's seminal article \cite{BvN36}. Since then, non-distributive lattices with an orthocomplement (and generalisations thereof) have been used as representatives of the algebra of propositions about the quantum system at hand. There are a number of well-known conceptual and interpretational problems with this kind of `logic'. For review of standard quantum logic(s), see the article \cite{DCG02}.

In the last few years, a different form of logic for quantum systems based on generalised spaces in the form of presheaves and topos theory has been developed by Chris Isham and this author \cite{DI(1),DI(2),DI(3),DI(4),Doe07b,DI(Coecke)08,Doe09,Doe11}. This new form of logic for quantum systems is based on a certain Heyting algebra $\Subcl{\Sig}$ of clopen, i.e., closed and open subobjects of the spectral presheaf $\Sig$. This generalised space takes the r\^ole of a state space for the quantum system. (All technical notions are defined in the main text.) In this way, one obtains a well-behaved intuitionistic form of logic for quantum systems which moreover has a topological underpinning.

In this article, we will continue the development of the topos-based form of logic for quantum systems. The main new observation is that the complete Heyting algebra $\Subcl{\Sig}$ of clopen subobjects representing propositions is also a complete co-Heyting algebra. Hence, we relate quantum systems to complete bi-Heyting algebras in a systematic way. This includes two notions of implication and two kinds of negation, as discussed in the following sections. 

The plan of the paper is as follows: in section \ref{Sec_Background}, we briefly give some background on standard quantum logic and the main ideas behind the new topos-based form of logic for quantum systems. Section \ref{Sec_BiHeytAlgs} recalls the definitions and main properties of Heyting, co-Heyting and bi-Heyting algebras, section \ref{Sec_SigEtc} introduces the spectral presheaf $\Sig$ and the algebra $\Subcl{\Sig}$ of its clopen subobjects. In section \ref{Sec_RepOfPropsAndBiHeyting}, the link between standard quantum logic and the topos-based form of quantum logic is established and it is shown that $\Subcl{\Sig}$ is a complete bi-Heyting algebra. In section \ref{Sec_NegsAndRegs}, the two kinds of negations associated with the Heyting resp. co-Heyting structure are considered. Heyting-regular and co-Heyting regular elements are characterised and a tentative physical interpretation of the two kinds of negation is given. Section \ref{Sec_Conclusion} concludes.

Throughout, we assume some familiarity with the most basic aspects of the theory of von Neumann algebras and with basics of category and topos theory. The text is interspersed with some physical interpretations of the mathematical constructions.

\section{Background}			\label{Sec_Background}
\textbf{Von Neumann algebras.} In this article, we will discuss structures associated with von Neumann algebras, see e.g. \cite{KR83}. This class of algebras is general enough to describe a large variety of quantum mechanical systems, including systems with symmetries and/or superselection rules. The fact that each von Neumann algebra has `sufficiently many' projections makes it attractive for quantum logic. More specifically, each von Neumann algebra is generated by its projections, and the spectral theorem holds in a von Neumann algebra, providing the link between self-adjoint operators (representing physical quantities) and projections (representing propositions).

The reader not familiar with von Neumann algebras can always take the algebra $\BH$ of all bounded operators on a separable, complex Hilbert space $\cH$ as an example of a von Neumann algebra. If the Hilbert space $\cH$ is finite-dimensional, $\dim\cH=n$, then $\BH$ is nothing but the algebra of complex $n\times n$-matrices.

\textbf{Standard quantum logic.} From the perspective of quantum logic, the key thing is that the projection operators in a von Neumann algebra $\N$ form a complete orthomodular lattice $\PN$. Starting from Birkhoff and von Neumann \cite{BvN36}, such lattices (and various kinds of generalisations, which we don't consider here) have been considered as quantum logics, or more precisely as algebras representing propositions about quantum systems.

The kind of propositions that we are concerned with (at least in standard quantum logic) are of the form ``the physical quantity $A$ has a value in the Borel set $\De$ of real numbers'', which is written shortly as ``$\Ain\De$''. These propositions are pre-mathematical entities that refer to the `world out there'. In standard quantum logic, propositions of the form ``$\Ain\De$'' are represented by projection operators via the spectral theorem. If, as we always assume, the physical quantity $A$ is described by a self-adjoint operator $\hA$ in a given von Neumann algebra $\N$, or is affiliated with $\N$ in the case that $\hA$ is unbounded, then the projection corresponding to ``$\Ain\De$'' lies in $\PN$. (For details on the spectral theorem see any book on functional analysis, e.g. \cite{KR83}.)

Following Birkhoff and von Neumann, one then interprets the lattice operations $\meet,\join$ in the projection lattice $\PN$ as logical connectives between the propositions represented by the projections. In this way, the meet $\meet$ becomes a conjunction and the join $\join$ a disjunction. Moreover, the orthogonal complement of a projection, $\hP':=\hat 1-\hP$, is interpreted as negation. Crucially, meets and joins do not distribute over each other. In fact, $\PN$ is a distributive lattice if and only if $\N$ is abelian if and only if all physical quantities considered are mutually compatible, i.e., co-measurable. 

Quantum systems always have some incompatible physical quantities, so $\N$ is never abelian and $\PN$ is non-distributive. This makes the interpretation of $\PN$ as an algebra of propositions somewhat dubious. There are many other conceptual difficulties with quantum logics based on orthomodular lattices, see e.g. \cite{DCG02}. 

\textbf{Contexts and coarse-graining.} The topos-based form of quantum logic that was established in \cite{DI(2)} and developed further in \cite{Doe07b,DI(Coecke)08,Doe09,Doe11} is fundamentally different from standard quantum logic. For some conceptual discussion, see in particular \cite{Doe09}. Two key ideas are \emph{contextuality} and \emph{coarse-graining} of propositions. Contextuality has of course been considered widely in foundations of quantum theory, in particular since Kochen and Specker's seminal paper \cite{KS67}. Yet, the systematic implementation of contextuality in the language of presheaves is comparatively new. It first showed up in work by Chris Isham and Jeremy Butterfield \cite{Ish97,IB98,IB99,IB00,IB00b,IB02,Ish05} and was substantially developed by this author and Isham. For recent, related work see also \cite{HLS09,CHLS09,HLS09b,HLS11} and \cite{AB11,AMS11}.

Physically, a context is nothing but a set of compatible, i.e., co-measurable physical quantities $(A_i)_{i\in I}$. Such a set determines and is determined by an abelian von Neumann subalgebra $V$ of the non-abelian von Neumann algebra $\N$ of (all) physical quantities. Each physical quantity $A_i$ in the set is represented by some self-adjoint operator $\hA$ in $V$.\footnote{From here on, we assume that all the physical quantities $A_i$ correspond to \emph{bounded} self-adjoint operators that lie in $\N$. Unbounded self-adjoint operators affiliated with $\N$ can be treated in a straightforward manner.} In fact, $V$ is generated by the operators $(\hA_i)_{i\in I}$ and the identity $\hat 1$, in the sense that $V=\{\hat 1, \hA_i \mid i\in I\}''$, where $\{S\}''$ denotes the double commutant of a set $S$ of operators (see e.g. \cite{KR83}).\footnote{We will often use the notation $V'$ for a subalgebra of $V$, which does \emph{not} mean the commutant of $V$. We trust that this will not lead to confusion.} Each abelian von Neumann subalgebra $V$ of $\N$ will be called a context, thus identifying the mathematical notion and its physical interpretation. The set of all contexts will be denoted $\VN$. Each context provides one of many `classical perspectives' on a quantum system. We partially order the set of contexts $\VN$ by inclusion. A smaller context $V'\subset V$ represents a `poorer', more limited classical perspective containing fewer physical quantities than $V$.

Each context $V\in\VN$ has a complete Boolean algebra $\PV$ of projections, and $\PV$ clearly is a sublattice of $\PN$. Propositions ``$\Ain\De$'' about the values of physical quantities $A$ in a (physical) context correspond to projections in the (mathematical) context $V$. Since $\PV$ is a Boolean algebra, there are Boolean algebra homomorphisms $\ld:\PV\ra\{0,1\}\simeq\{\false,\true\}$, which can be seen as truth-value assignments as usual. Hence, there are consistent truth-value assignments for all propositions ``$\Ain\De$'' for propositions about physical quantities \emph{within} a context.

The key result by Kochen and Specker \cite{KS67} shows that for $\N=\BH$, $\dim\cH\geq 3$, there are no truth-value assignments for \emph{all} contexts simultaneously in the following sense: there is no family of Boolean algebra homomorphisms $(\ld_V:\PV\ra\{0,1\})_{V\in\VN}$ such that if $V'=V\cap\tilde V$ is a subcontext of both $V$ and $\tilde V$, then $\ld_{V'}=\ld_V|_{V'}=\ld_{\tilde V}|_{V'}$, where $\ld_V|_{V'}$ is the restriction of $\ld_V$ to the subcontext $V'$, and analogously $\ld_{\tilde V}|_{V'}$. As Isham and Butterfield realised \cite{IB98,IB00}, this means that a certain presheaf has no global elements. In \cite{Doe05}, it is shown that this result generalises to all von Neumann algebras without a type $I_2$-summand.

In the topos approach to quantum theory, propositions are represented not by projections, but by suitable subobjects of a quantum state space. An obstacle arises since the Kochen-Specker theorem seems to show that such a quantum state space cannot exist. Yet, if one considers presheaves instead of sets, this problem can be overcome. The presheaves we consider are `varying sets' $(\ps S_V)_{V\in\VN}$, indexed by contexts. Whenever $V'\subset V$, there is a function defined from $\ps S_V$, the set associated with the context $V$, to $\ps S_{V'}$, the set associated with the smaller context $V'$. This makes $\ps S=(\ps S_V)_{V\in\VN}$ into a contravariant, $\Set$-valued functor.

Since by contravariance we go from $\ps S_V$ to $\ps S_{V'}$, there is a built-in idea of \emph{coarse-graining}: $V$ is the bigger context, containing more self-adjoint operators and more projections than the smaller context $V'$, so we can describe more physics from the perspective of $V$ than from $V'$. Typically, the presheaves defined over contexts will mirror this fact: the component $\ps S_V$ at $V$ contains more information (in a suitable sense, to be made precise in the examples in section \ref{Sec_SigEtc}) than $\ps S_{V'}$, the component at $V'$. Hence, the presheaf map $\ps S(i_{V'V}):\ps S_V\ra\ps S_{V'}$ will implement a form of coarse-graining of the information available at $V$ to that available at $V'$.

The subobjects of the quantum state space, which will be called the spectral presheaf $\Sig$, form a (complete) Heyting algebra. This is typical, since the subobjects of any object in a topos form a Heyting algebra. Heyting algebras are the algebraic representatives of (propositional) intuitionistic logics. In fact, we will not consider \emph{all} subobjects of the spectral presheaf, but rather the so-called clopen subobjects. The latter also form a complete Heyting algebra, as was first shown in \cite{DI(2)} and is proven here in a different way, using Galois connections, in section \ref{Sec_RepOfPropsAndBiHeyting}. The difference between the set of all subobjects of the spectral presheaf and the set of clopen subobjects is analogous to the difference between all subsets of a classical state space and (equivalence classes modulo null subsets of) measurable subsets.

Together with the representation of states (which we will not discuss here, but see \cite{DI(2),Doe08,Doe09,DI12}), these constructions provide an intuitionistic form of logic for quantum systems. Moreover, there is a clear topological underpinning, since the quantum state space $\Sig$ is a generalised space associated with the nonabelian algebra $\N$.

The construction of the presheaf $\Sig$ and its algebra of subobjects incorporates the concepts of contextuality and coarse-graining in a direct way, see sections \ref{Sec_SigEtc} and \ref{Sec_RepOfPropsAndBiHeyting}.

\section{Bi-Heyting algebras}			\label{Sec_BiHeytAlgs}
The use of bi-Heyting algebras in superintuitionistic logic was developed by Rauszer \cite{Rau73,Rau77}. Lawvere emphasised the importance of co-Heyting and bi-Heyting algebras in category and topos theory, in particular in connection with continuum physics \cite{Law86,Law91}. Reyes, with Makkai \cite{MR95} and Zolfaghari \cite{RZ96}, connected bi-Heyting algebras with modal logic. In a recent paper, Bezhanishvili et al. \cite{BBGK10} prove (among other things) new duality theorems for bi-Heyting algebras based on bitopological spaces. Majid has suggested to use Heyting and co-Heyting algebras within a tentative representation-theoretic approach to the formulation of quantum gravity \cite{Maj95,Maj08}.

As far as we are aware, nobody has connected quantum systems and their logic with bi-Heyting algebras before.

The following definitions are standard and can be found in various places in the literature; see e.g. \cite{RZ96}.

A \textbf{Heyting algebra $H$} is a lattice with bottom element $0$ and top element $1$ which is a cartesian closed category. In other words, $H$ is a lattice such that for any two elements $A,B\in H$, there exists an exponential $A\Rightarrow B$, called the \textbf{Heyting implication (from $A$ to $B$)}, which is characterised by the adjunction
\begin{equation}
			C\leq(A\Rightarrow B)\quad\text{if and only if}\quad C\wedge A\leq B.
\end{equation}
This means that the product (meet) functor $A\meet_:H\ra H$ has a right adjoint $A\Rightarrow\_:H\ra H$ for all $A\in H$.

It is straightforward to show that the underlying lattice of a Heyting algebra is distributive. If the underlying lattice is complete, then the adjoint functor theorem for posets shows that for all $A\in H$ and all families $(A_i)_{i\in I}\subseteq H$, the following infinite distributivity law holds:
\begin{equation}
			A\meet\bjoin_{i\in I}A_i = \bjoin_{i\in I}(A\meet A_i).
\end{equation}

The \textbf{Heyting negation} is defined as
\begin{align}
			\neg: H &\lra H^{\op}\\			\nonumber
			A &\lmt (A\Rightarrow 0).
\end{align}
The defining adjunction shows that $\neg A=\bjoin\{B\in H \mid A\meet B=0\}$, i.e., $\neg A$ is the largest element in $H$ such that $A\meet\neg A=0$. Some standard properties of the Heyting negation are:
\begin{align}
			& A\leq B\text{ implies }\neg A\geq\neg B,\\
			& \neg\neg A\geq A,\\
			& \neg\neg\neg A=\neg A\\
			& \neg A\join A\leq 1.
\end{align}
Interpreted in logical terms, the last property on this list means that in a Heyting algebra the law of excluded middle need not hold: in general, the disjunction between a proposition represented by $A\in H$ and its Heyting negation (also called Heyting complement, or pseudo-complement) $\neg A$ can be smaller than $1$, which represents the trivially true proposition. Heyting algebras are algebraic representatives of (propositional) intuitionistic logics.

A canonical example of a Heyting algebra is the topology $\mc T$ of a topological space $(X,\mc T)$, with unions of open sets as joins and intersections as meets.

A \textbf{co-Heyting algebra} (also called \textbf{Brouwer algebra $J$}) is a lattice with bottom element $0$ and top element $1$ such that the coproduct (join) functor $A\join\_:J\ra J$ has a left adjoint $A\Leftarrow\_:J\ra J$, called the \textbf{co-Heyting implication (from $A$)}. It is characterised by the adjunction
\begin{equation}
			(A\Leftarrow B)\leq C\text{ iff }A\leq B\join C.
\end{equation}

It is straightforward to show that the underlying lattice of a co-Heyting algebra is distributive. If the underlying lattice is complete, then the adjoint functor theorem for posets shows that for all $A\in J$ and all families $(A_i)_{i\in I}\subseteq J$, the following infinite distributivity law holds:
\begin{equation}
			A\join\bmeet_{i\in I}A_i = \bmeet_{i\in I}(A\join A_i).
\end{equation}

The \textbf{co-Heyting negation} is defined as
\begin{align}
			\sim: J &\lra J^{\op}\\
			A &\lmt (1\Leftarrow A).
\end{align}
The defining adjunction shows that $\sim A=\bmeet\{B\in J \mid A\join B=1\}$, i.e., $\sim A$ is the smallest element in $J$ such that $A\join \sim A=1$. Some properties of the co-Heyting negation are:
\begin{align}
			& A\leq B\text { implies }\sim A\geq\sim B,\\
			& \sim\sim A\leq A,\\
			&\sim\sim\sim A=\sim A\\
			&\sim A\meet A\geq 0.
\end{align}
Interpreted in logical terms, the last property on this list means that in a co-Heyting algebra the law of noncontradiction does not hold: in general, the conjunction between a proposition represented by $A\in J$ and its co-Heyting negation $\sim A$ can be larger than $0$, which represents the trivially false proposition. Co-Heyting algebras are algebraic representatives of (propositional) paraconsistent logics.

We will not discuss paraconsistent logic in general, but in the final section \ref{Sec_NegsAndRegs}, we will give and interpretation of the co-Heyting negation showing up in the form of quantum logic to be presented in this article.

A canonical example of a co-Heyting algebra is given by the closed sets $\mc C$ of a topological space, with unions of closed sets as joins and intersections as meets.

Of course, Heyting algebras and co-Heyting algebras are dual notions. The opposite $H^{\op}$ of a Heyting algebra is a co-Heyting algebra and vice versa.

A \textbf{bi-Heyting algebra $K$} is a lattice which is a Heyting algebra and a co-Heyting algebra. For each $A\in K$, the functor $A\meet\_:K \ra K$ has a right adjoint $A\Rightarrow\_:K\ra K$, and the functor $A\join\_:K\ra K$ has a left adjoint $K\Leftarrow\_:K\ra K$. A bi-Heyting algebra $K$ is called complete if it is complete as a Heyting algebra and complete as a co-Heyting algebra.

A canonical example of a bi-Heyting algebra is a Boolean algebra $\mc B$. (Note that by Stone's representation theorem, each Boolean algebra is isomorphic to the algebra of clopen, i.e., closed and open, subsets of its Stone space. This gives the connection with the topological examples.) In a Boolean algebra, we have for the Heyting negation that, for all $A\in\mc B$,
\begin{equation}
			A\join\neg A=1,
\end{equation}
which is the characterising property of the co-Heyting negation. In fact, in a Boolean algebra, $\neg =\sim$.

\section{The spectral presheaf of a von Neumann algebra and clopen subobjects}			\label{Sec_SigEtc}
With each von Neumann algebra $\N$, we associate a particular presheaf, the so-called spectral presheaf. A distinguished family of subobjects, the so-called clopen subobjects, are defined and their interpretation is given: clopen subobjects can be seen as families of local propositions, compatible with respect to coarse-graining. The constructions presented here summarise those discussed in \cite{DI(1),DI(2),DI(Coecke)08,DI12}.

Let $\N$ be a von Neumann algebra, and let $\VN$ be the set of its abelian von Neumann subalgebras, partially ordered under inclusion. We only consider subalgebras $V\subset\N$ which have the same unit element as $\N$, given by the identity operator $\hat 1$ on the Hilbert space on which $\N$ is represented. By convention, we exclude the trivial subalgebra $V_0=\bbC\hat 1$ from $\VN$. (This will play an important r\^ole in the discussion of the Heyting negation in section \ref{Sec_NegsAndRegs}.) The poset $\VN$ is called the \textbf{context category of the von Neumann algebra $\N$}.

For $V',V\in\VN$ such that $V'\subset V$, the inclusion $i_{V'V}:V'\hookrightarrow V$ restricts to a morphism $i_{V'V}|_{\mc P(V')}:\mc P(V')\ra\PV$ of complete Boolean algebras. In particular, $i_{V'V}$ preserves all meets, hence it has a left adjoint
\begin{align}
			\deo_{V,V'}: \PV &\lra \mc P(V')\\			\nonumber
			\hP &\lmt \deo_{V,V'}(\hP)=\bmeet\{\hQ\in V' \mid \hQ\geq \hP\}
\end{align}
that preserves all joins, i.e., for all families $(\hP_i)_{i\in I}\subseteq\PV$, it holds that
\begin{equation}
			\deo_{V,V'}(\bjoin_{i\in I}\hP_i)=\bjoin_{i\in I}\deo_{V,V'}(\hP_i),
\end{equation}
where the join on the left hand side is taken in $\PV$ and the join on the right hand side is in $\mc P(V')$. If $W\subset V'\subset V$, then $\deo_{V,W}=\deo_{V',W}\circ\deo_{V,V'}$, obviously.

We note that distributivity of the lattices $\PV$ and $\mc P(V')$ plays no r\^ole here. If $\N$ is a von Neumann algebra and $\cM$ is any von Neumann subalgebra such that their unit elements coincide, $\hat 1_{\cM}=\hat 1_{\N}$, then there is a join-preserving map
\begin{align}
			\deo_{\N,\cM'}: \PN &\lra \mc P(\cM)\\			\nonumber
			\hP &\lmt \deo_{\N,\cM}(\hP)=\bmeet\{\hQ\in \mc P(\cM) \mid \hQ\geq \hP\}.
\end{align}

Recall that the Gel'fand spectrum $\Sigma(A)$ of an abelian $C^*$-algebra $A$ is the set of algebra homomorphisms $\ld:A\ra\bbC$. Equivalently, the elements of the Gel'fand spectrum $\Sigma(A)$ are the pure states of $A$. The set $\Sigma(A)$ is given the relative weak*-topology (as a subset of the dual space of $A$), which makes it into a compact Hausdorff space. By Gel'fand-Naimark duality, $A\simeq C(\Sigma(A))$, that is, $A$ is isometrically $*$-isomorphic to the abelian $C^*$-algebra $C(\Sigma(A))$ of continuous, complex-valued functions on $\Sigma(A)$, equipped with the supremum norm. If $A$ is an abelian von Neumann algebra, then $\Sigma(A)$ is extremely disconnected.

We now define the main object of interest:
\begin{definition}
Let $\N$ be a von Neumann algebra. The \textbf{spectral presheaf $\Sig$ of $\N$} is the presheaf over $\VN$ given
\begin{itemize}
	\item [(a)] on objects: for all $V\in\VN$, $\Sig_V:=\Sigma(V)$, the Gel'fand spectrum of $V$,
	\item [(b)] on arrows: for all inclusions $i_{V'V}:V'\hookrightarrow V$,
	\begin{align}
				\Sig(i_{V'V}): \Sig_V &\lra \Sig_{V'}\\			\nonumber
				\ld &\lmt \ld|_{V'}.
	\end{align}
\end{itemize}
\end{definition}
The restriction maps $\Sig(i_{V'V})$ are well-known to be continuous, surjective maps with respect to the Gel'fand topologies on $\Sig_V$ and $\Sig_{V'}$, respectively. They are also open and closed, see e.g. \cite{DI(2)}.

We equip the spectral presheaf with a distinguished family of subobjects (which are subpresheaves):
\begin{definition}
A subobject $\ps S$ of $\Sig$ is called \textbf{clopen} if for each $V\in\VN$, the set $\ps S_V$ is a clopen subset of the Gel'fand spectrum $\Sig_V$. The set of all clopen subobjects of $\Sig$ is denoted as $\Subcl{\Sig}$.
\end{definition}
The set $\Subcl{\Sig}$, together with the lattice operations and bi-Heyting algebra structure defined below, is the algebraic implementation of the new topos-based form of quantum logic. The elements $\ps S\in\Subcl{\Sig}$ represent propositions about the values of the physical quantities of the quantum system. The most direct connection with propositions of the form ``$\Ain\De$'' is given by the map called daseinisation, see Def. \ref{Def_OuterDas} below.

We note that the concept of contextuality (cf. section \ref{Sec_Background}) is implemented by this construction, since $\Sig$ is a presheaf over the context category $\VN$. Moreover, coarse-graining is mathematically realised by the fact that we use subobjects of presheaves. In the case of $\Sig$ and its clopen subobjects, this means the following: for each context $V\in\VN$, the component $\ps S_V\subseteq\Sig_V$ represents a \emph{local proposition} about the value of some physical quantity. If $V'\subset V$, then $\ps S_{V'}\supseteq\Sig(i_{V'V})(\ps S_V)$ (since $\ps S$ is a subobject), so $\ps S_{V'}$ represents a local proposition at the smaller context $V'\subset V$ that is \emph{coarser} than (i.e., a consequence of) the local proposition represented by $\ps S_V$.

A clopen subobject $\ps S\in\Subcl{\Sig}$ can hence be interpreted as a collection of \emph{local propositions}, one for each context, such that smaller contexts are assigned coarser propositions.

Clearly, the definition of clopen subobjects makes use of the Gel'fand topologies on the components $\Sig_V$, $V\in\VN$. We note that for each abelian von Neumann algebra $V$ (and hence for each context $V\in\VN$), there is an isomorphism of complete Boolean algebras
\begin{align}			\label{Eq_alphaV}
			\alpha_V:\PV &\lra \Cp(\Sig_V)\\			\nonumber
			\hP &\lmt \{\ld\in\Sig_V \mid \ld(\hP)=1\}.
\end{align}
Here, $\Cp(\Sig_V)$ denotes the clopen subsets of $\Sig_V$.

There is a purely order-theoretic description of $\Subcl\Sig$: let 
\begin{equation}
			\mc P:=\prod_{V\in\VN}\PV
\end{equation}
be the set of choice functions $f:\VN\ra\coprod_{V\in\VN}\PV$, where $f(V)\in\PV$ for all $V\in\VN$. Equipped with pointwise operations, $\mc P$ is a complete Boolean algebra, since each $\PV$ is a complete Boolean algebra. Consider the subset $\mc S$ of $\mc P$ consisting of those functions for which $V'\subset V$ implies $f(V')\geq f(V)$. The subset $\mc S$ is closed under all meets and joins (in $\mc P$), and clearly, $\mc S\simeq\Subcl\Sig$.

We define a partial order on $\Subcl{\Sig}$ in the obvious way:
\begin{equation}
			\forall \ps S,\ps T\in\Subcl{\Sig} :\quad \ps S \leq \ps T :\Longleftrightarrow (\forall V\in\VN: \ps S_V\subseteq \ps T_V).
\end{equation}
We define the corresponding (complete) lattice operations in a stagewise manner, i.e., at each context $V\in\VN$ separately: for any family $(\ps S_i)_{i\in I}$,
\begin{equation}
			\forall V\in\VN : (\bmeet_{i\in I}\ps S_i)_V:=\on{int}(\bigcap_{i\in I}\ps S_{i;V}),
\end{equation}
where $\ps S_{i;V}\subseteq\Sig_V$ is the component at $V$ of the clopen subobject $\ps S_i$. Note that the lattice operation is not just componentwise set-theoretic intersection, but rather the interior (with respect to the Gel'fand topology) of the intersection. This guarantees that one obtains clopen subsets at each stage $V$, not just closed ones. Analogously,
\begin{equation}
			\forall V\in\VN : (\bjoin_{i\in I}\ps S_i)_V:=\on{cl}(\bigcup_{i\in I}\ps S_{i;V}),
\end{equation}
where the closure of the union is necessary in order to obtain clopen sets, not just open ones. The fact that meets and joins are not given by set-theoretic intersections and unions also means that $\Subcl{\Sig}$ is not a sub-Heyting algebra of the Heyting algebra $\Sub{\Sig}$ of all subobjects of the spectral presheaf. The difference between $\Sub{\Sig}$ and $\Subcl{\Sig}$ is analogous to the difference between the power set $PX$ of a set $X$ and the complete Boolean algebra $BX$ of measurable subsets (with respect to some measure) modulo null sets. For results on measures and quantum states from the perspective of the topos approach, see \cite{Doe08,DI12}.

In section \ref{Sec_RepOfPropsAndBiHeyting}, we will show that $\Subcl{\Sig}$ is a complete bi-Heyting algebra.

\begin{example}
For illustration, we consider a simple example: let $\N$ be the \emph{abelian} von Neumann of diagonal matrices in $3$ dimensions. This is given by
\begin{equation}
			\N:=\bbC\hP_1+\bbC\hP_2+\bbC\hP_3,
\end{equation}
where $\hP_1,\hP_2,\hP_3$ are pairwise orthogonal rank-$1$ projections on a $3$-dimensional Hilbert space. The projection lattice $\PN$ of $\N$ has $8$ elements,
\begin{equation}
			\PN=\{\hat 0,\hP_1,\hP_2,\hP_2,\hP_3,\hP_1+\hP_2,\hP_1+\hP_3,\hP_2+\hP_3,\hat 1\}.
\end{equation}
Of course, $\PN$ is a Boolean algebra.

The algebra $\N$ has three non-trivial abelian subalgebras,
\begin{equation}
			V_i:=\bbC\hP_i+\bbC(\hat 1-\hP_i),\quad i=1,2,3.
\end{equation}
Hence, the context category $\VN$ is the $4$-element poset with $\N$ as top element and $V_i\subset\N$ for $i=1,2,3$.

The Gel'fand spectrum $\Sig_{\N}$ of $\N$ has three elements $\ld_1,\ld_2,\ld_3$ such that
\begin{equation}
			\ld_i(\hP_j)=\delta_{ij}.
\end{equation}
The Gel'fand spectrum $\Sig_{V_1}$ of $V_1$ has two elements $\ld'_1,\ld'_{2+3}$ such that
\begin{equation}
			\ld'_1(\hP_1)=1,\quad\ld'_1(\hat 1-\hP_1)=0,\quad\ld'_{2+3}(\hP_1)=0,\quad\ld'_{2+3}(\hat 1-\hP_1)=1.
\end{equation}
(Note that $\hat 1-\hP_1=\hP_2+\hP_3$.) Analogously, the spectrum $\Sig_{V_2}$ has two elements $\ld'_{1+3},\ld'_2$, and the spectrum $\Sig_{V_3}$ has two elements $\ld'_{1+2},\ld'_3$.

Consider the restriction map of the spectral presheaf from $\Sig_{\N}$ to $\Sig_1$:
\begin{equation}
			\Sig(i_{V_1,\N})(\ld_1)=\ld'_1,\quad\Sig(i_{V_1,\N})(\ld_2)=\Sig(i_{V_1,\N})(\ld_3)=\ld'_{2+3}.
\end{equation}
The restriction maps from $\Sig_{\N}$ to $\Sig_{V_2}$ resp. $\Sig_{V_3}$ are defined analogously. This completes the description of the spectral presheaf $\Sig$ of the algebra $\N$.

We will now determine all clopen subobjects of $\Sig$. First, note that the Gel'fand spectra all are discrete sets, so topological questions are trivial here. We simply have to determine all subobjects of $\Sig$. We distinguish a number of cases:
\begin{itemize}
	\item [(a)] Let $\ps S\in\Subcl\Sig$ be a subobject such that $\ps S_{\N}=\Sig_{\N}=\{\ld_1,\ld_2,\ld_3\}$. Then the restriction maps of $\Sig$ dictate that for each $V_i$, $i=1,2,3$, we have $\ps S_{V_i}\supset\Sig(i_{V_i,\N})(\ps S_{N})=\Sig_{V_i}$, so $\ps S$ must be $\Sig$ itself.
	
	\item [(b)] Let $\ps S$ be a subobject such that $\ps S_{\N}$ contains two elements, e.g. $\ps S_{\N}=\{\ld_1,\ld_2\}$. Then $\ps S_{V_1}=\Sig_{V_1}$ and $\ps S_{V_2}=\Sig_{V_2}$, but $\ps S_{V_3}$ can either be $\{\ld'_{1+2}\}$ or $\{\ld'_{1+2},\ld'_3\}$, so there are $2$ options. Moreover, there are three ways of picking two elements from the three-element set $\Sig_{\N}$, so we have $3\cdot 2=6$ subobjects $\ps S$ with two elements in $\ps S_{\N}$.
	
	\item [(c)] Let $\ps S$ be such that $\ps S_{\N}$ contains one element, e.g. $\ps S_{\N}=\{\ld_1\}$. Then $\ps S_{V_1}$ can either be $\{\ld'_1\}$ or $\{\ld'_1,\ld'_{2+3}\}$; $\ps S_{V_2}$ can either be $\{\ld'_{1+3}\}$ or $\{\ld'_{1+3},\ld'_2\}$; and $\ps S_{V_3}$ can either be $\{\ld'_{1+2}\}$ or $\{\ld'_{1+2},\ld'_3\}$. Hence, there are $2^3$ options. Moreover, there are three ways of picking one element from $\Sig_{\N}$, so there are $3\cdot 2^3=24$ subobjects $\ps S$ with one element in $\ps S_{\N}$.
	
	\item [(d)] Finally, consider a subobject $\ps S$ such that $\ps S_{\N}=\emptyset$. Since the $V_i$ are not contained in one another, there are no conditions arising from restriction maps of the spectral presheaf $\Sig$. Hence, we can pick an arbitrary subset of $\Sig_{V_i}$ for $i=1,2,3$. Since each $\Sig_{V_i}$ has $2$ elements, there are $4$ subsets of each, so we have $4^3=64$ subobjects $\ps S$ with $\ps S_{\N}=\emptyset$.
\end{itemize}
In all, $\Subcl\Sig$ has $64+24+6+1=95$ elements.
\end{example}

We conclude this section with the remark that the pertinent topos in which the spectral presheaf (and the other presheaves discussed in this section) lie of course is the topos $\SetVNop$ of presheaves over the context category $\VN$.

\section{Representation of propositions and bi-Heyting algebra structure}			\label{Sec_RepOfPropsAndBiHeyting}
\begin{definition}			\label{Def_OuterDas}
Let $\N$ be a von Neumann algebra, and let $\PN$ be its lattice of projections. The map
\begin{align}
			\ps\deo: \PN &\lra \Subcl{\Sig}\\			\nonumber
			\hP &\lmt \ps\deo(\hP):=(\alpha_V(\deo_{\N,V}(\hP)))_{V\in\VN}
\end{align}
is called \textbf{outer daseinisation of projections}.
\end{definition}
This map was introduced in \cite{DI(2)} and discussed in detail in \cite{Doe09,Doe11}. It can be seen as a `translation' map from standard quantum logic, encoded by the complete orthomodular lattice $\PN$ of projections, to a form of (super)intuitionistic logic for quantum systems, based on the clopen subobjects of the spectral presheaf $\Sig$, which conceptually plays the r\^ole of a quantum state space.

In standard quantum logic, the projections $\hP\in\PN$ represent propositions of the form ``$\Ain\De$'', that is, ``the physical quantity $A$ has a value in the Borel set $\De$ of real numbers''. The connection between propositions and projections is given by the spectral theorem. Outer daseinisation can hence be seen as a map from propositions of the form ``$\Ain\De$'' into the bi-Heyting algebra $\Subcl{\Sig}$ of clopen subobjects of the spectral presheaf. A projection $\hP$, representing a proposition ``$\Ain\De$'', is mapped to a collection $(\deo_{\N,V}(\hP))_{V\in\VN}$, consisting of one projection $\deo_{\N,V}(\hP)$ for each context $V\in\VN$. (Each isomorphism $\alpha_V$, $V\in\VN$, just maps the projection $\deo_{\N,V}(\hP)$ to the corresponding clopen subset of $\Sig_V$, which does not affect the interpretation.)

Since we have $\deo_{\N,V}(\hP)\geq\hP$ for all $V$, the projection $\deo_{\N,V}(\hP)$ represents a coarser (local) proposition than ``$\Ain\De$'' in general. For example, if $\hP$ represents ``$\Ain\De$'', then $\deo_{\N,V}(\hP)$ may represent ``$\Ain\Ga$'' where $\Ga\supset\De$. 

The map $\ps\deo$ preserves all joins, as shown in section 2.D of \cite{DI(2)} and in \cite{Doe09}. Here is a direct argument: being left adjoint to the inclusion of $\PV$ into $\PN$, the map $\deo_{\N,V}$ preserves all colimits, which are joins. Moreover, $\alpha_V$ is an isomorphism of complete Boolean algebras, so $\alpha_V\circ\deo_{\N,V}$ preserves all joins. This holds for all $V\in\VN$, and joins in $\Subcl{\Sig}$ are defined stagewise, so $\ps\deo$ preserves all joins. 

Moreover, $\ps\deo$ is order-preserving and injective, but not surjective. Clearly, $\ps\deo(\hat 0)=\ps 0$, the empty subobject, and $\ps\deo(\hat 1)=\Sig$. For meets, we have
\begin{equation}
			\forall \hP,\hQ\in\PN : \ps\deo(\hP\meet\hQ)\leq\ps\deo(\hP)\meet\ps\deo(\hQ).
\end{equation}
In general, $\ps\deo(\hP)\meet\ps\deo(\hQ)$ is not of the form $\ps\deo(\hat R)$ for any projection $\hat R\in\PN$. See \cite{DI(2),Doe09} for proof of these statements.

Let $(\ps S_i)_{i\in I}\subseteq\Subcl{\Sig}$ be a family of clopen subobjects of $\Sig$, and let $\ps S\in\Subcl{\Sig}$. Then
\begin{equation}
			\forall V\in\VN : (\ps S\meet\bjoin_{i\in I}\ps S_i)_V=\bjoin_{i\in I}(\ps S_V\meet\ps S_{i;V}),
\end{equation}
since $\Cp(\Sig_V)$ is a distributive lattice (in fact, a complete Boolean algebra) in which finite meets distribute over arbitrary joins. Hence, for each $\ps S\in\Subcl{\Sig}$, the functor
\begin{equation}
			\ps S\meet\_:\Subcl{\Sig} \lra \Subcl{\Sig}
\end{equation}
preserves all joins, so by the adjoint functor theorem for posets, it has a right adjoint
\begin{equation}
			\ps S\Rightarrow\_:\Subcl{\Sig} \lra \Subcl{\Sig}.
\end{equation}
This map, the \textbf{Heyting implication from $\ps S$}, makes $\Subcl{\Sig}$ into a complete Heyting algebra. This was shown before in \cite{DI(2)}. The Heyting implication is given by the adjunction
\begin{equation}
			 \ps R\meet\ps S\leq\ps T\quad\text{if and only if}\quad\ps R\leq (\ps S\Rightarrow\ps T).
\end{equation}
(Note that $\ps S\meet\_=\_\meet\ps S$.) This implies that
\begin{equation}
			(\ps S\Rightarrow\ps T)=\bjoin\{\ps R\in\Subcl{\Sig} \mid \ps R\meet\ps S\leq\ps T\}.
\end{equation}
The stagewise definition is: for all $V\in\VN$,
\begin{equation}
			(\ps S\Rightarrow\ps T)_V=\{\ld\in\Sig_V \mid \forall V'\subseteq V:\text{ if } \ld|_{V'}\in \ps S_{V'}\text{, then }\ld|_{V'}\in\ps T_{V'}\}.
\end{equation}
As usual, the \textbf{Heyting negation $\neg$} is defined for all $\ps S\in\Subcl{\Sig}$ by
\begin{equation}
			\neg\ps S:=(\ps S\Rightarrow\ps 0).
\end{equation}
That is, $\neg\ps S$ is the largest element of $\Subcl{\Sig}$ such that
\begin{equation}
			\ps S\meet\neg\ps S=\ps 0.
\end{equation}
The stagewise expression for $\neg\ps S$ is 
\begin{equation}			\label{Eq_HeytNegStagew}
			(\neg\ps S)_V=\{\ld\in\Sig_V \mid \forall V'\subseteq V:\ld|_{V'}\notin \ps S_{V'}\}.
\end{equation}
In $\Subcl{\Sig}$, we also have, for all families $(\ps S_i)_{i\in I}\subseteq\Subcl{\Sig}$ and all $\ps S\in\Subcl{\Sig}$,
\begin{equation}
				\forall V\in\VN: (\ps S\join\bmeet_{i\in I}\ps S_i)_V=\bmeet_{i\in I}(\ps S_V\join\ps S_{i;V}),
\end{equation}
since finite joins distribute over arbitrary meets in $\Cp(\Sig)$. Hence, for each $\ps S$ the functor
\begin{equation}
			\ps S\join\_:\Subcl{\Sig} \lra \Subcl{\Sig}
\end{equation}
preserves all meets, so it has a left adjoint
\begin{equation}
			\ps S\Leftarrow\_: \Subcl{\Sig} \lra \Subcl{\Sig}
\end{equation}
which we call \textbf{co-Heyting implication}. This map makes $\Subcl{\Sig}$ into a complete co-Heyting algebra. It is characterised by the adjunction
\begin{equation}
			(\ps S\Leftarrow\ps T)\leq\ps R\quad\text{iff}\quad\ps S\leq\ps T\join\ps R,
\end{equation}
so
\begin{equation}			\label{Eq_coImp}
			(\ps S\Leftarrow\ps T)=\bmeet\{\ps R\in\Subcl{\Sig} \mid \ps S\leq\ps T\join\ps R\}.
\end{equation}
One can think of $\ps S\Leftarrow\_$ as a kind of `subtraction' (see e.g. \cite{RZ96}): $\ps S\Leftarrow\ps T$ is the smallest clopen subobject $\ps R$ for which $\ps T\join\ps R$ is bigger then $\ps S$, so it encodes how much is `missing' from $\ps T$ to cover $\ps S$.

We define a \textbf{co-Heyting negation} for each $\ps S\in\Subcl{\Sig}$ by
\begin{equation}
			\sim\ps S:=(\Sig\Leftarrow\ps S).
\end{equation}
(Note that $\Sig$ is the top element in $\Subcl{\Sig}$.) Hence, $\sim\ps S$ is the smallest clopen subobject such that
\begin{equation}
			\sim\ps S\join\ps S=\Sig
\end{equation}
holds. We have shown in a direct manner, without use of topos theory as in section \ref{Sec_SigEtc}:

\begin{proposition}
$(\Subcl{\Sig},\meet,\join,\ps 0,\Sig,\Rightarrow,\neg,\Leftarrow,\sim)$ is a complete bi-Heyting algebra.
\end{proposition}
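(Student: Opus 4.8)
The proposition largely records facts already assembled above, so the plan is to organise them into the four ingredients in the definition of a complete bi-Heyting algebra: that $\Subcl{\Sig}$ is a complete lattice under $\meet$ and $\join$; that the two infinite distributivity laws hold; that consequently each $\ps S\meet\_$ preserves arbitrary joins and each $\ps S\join\_$ preserves arbitrary meets; and that the adjoint functor theorem for posets then produces the Heyting implication $\Rightarrow$ and the co-Heyting implication $\Leftarrow$, making $\Subcl{\Sig}$ simultaneously a complete Heyting and a complete co-Heyting algebra, from which the two negations are obtained.

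First I would verify that the stagewise prescriptions $(\bmeet_{i}\ps S_i)_V=\on{int}(\bigcap_i\ps S_{i;V})$ and $(\bjoin_i\ps S_i)_V=\on{cl}(\bigcup_i\ps S_{i;V})$ define clopen subobjects of $\Sig$ and that these compute the infimum and supremum in $(\Subcl{\Sig},\leq)$. Clopenness at each stage holds because $\Cp(\Sig_V)\cong\PV$ is a complete Boolean algebra — equivalently, each Gel'fand spectrum $\Sig_V$ is extremally disconnected — so that the interior of an intersection of clopen sets and the closure of a union of clopen sets are again clopen, and are in fact precisely the meet and join in $\Cp(\Sig_V)$. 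Compatibility with the restriction maps, i.e.\ that these families really are subpresheaves of $\Sig$, uses that each $\Sig(i_{V'V})$ is continuous, open and closed, as recorded earlier. Since the order $\leq$ on $\Subcl{\Sig}$ is defined stagewise, being the meet (resp.\ join) in every $\Cp(\Sig_V)$ forces $\bmeet$ (resp.\ $\bjoin$) to be the meet (resp.\ join) in $\Subcl{\Sig}$; hence $\Subcl{\Sig}$ is a complete lattice.

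Next I would record the distributivity laws: for each $\ps S\in\Subcl{\Sig}$ and each family $(\ps S_i)_{i\in I}$ one has $(\ps S\meet\bjoin_i\ps S_i)_V=\bjoin_i(\ps S_V\meet\ps S_{i;V})$ and, dually, $(\ps S\join\bmeet_i\ps S_i)_V=\bmeet_i(\ps S_V\join\ps S_{i;V})$ at every $V$, because finite meets distribute over arbitrary joins and finite joins over arbitrary meets in the complete Boolean algebra $\Cp(\Sig_V)$. Equivalently, the monotone map $\ps S\meet\_:\Subcl{\Sig}\ra\Subcl{\Sig}$ preserves all joins and $\ps S\join\_:\Subcl{\Sig}\ra\Subcl{\Sig}$ preserves all meets.

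Finally, since $\Subcl{\Sig}$ is a complete lattice, the adjoint functor theorem for posets gives $\ps S\meet\_$ a right adjoint $\ps S\Rightarrow\_$ characterised by $\ps R\meet\ps S\leq\ps T\iff\ps R\leq(\ps S\Rightarrow\ps T)$, so $\Subcl{\Sig}$ is a complete Heyting algebra; and it gives $\ps S\join\_$ a left adjoint $\ps S\Leftarrow\_$ characterised by $(\ps S\Leftarrow\ps T)\leq\ps R\iff\ps S\leq\ps T\join\ps R$, so $\Subcl{\Sig}$ is a complete co-Heyting algebra. Being simultaneously a complete Heyting and a complete co-Heyting algebra, $\Subcl{\Sig}$ is a complete bi-Heyting algebra, and $\neg\ps S=(\ps S\Rightarrow\ps 0)$, $\sim\ps S=(\Sig\Leftarrow\ps S)$ are by construction its Heyting and co-Heyting negations. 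The only step demanding genuine care is the first one — that the interior/closure prescriptions really are well-defined subobjects of $\Sig$ that compute lattice meets and joins — and it rests squarely on the topological properties (continuity, openness, closedness) of the restriction maps $\Sig(i_{V'V})$.
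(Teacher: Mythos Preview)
Your proposal is correct and follows essentially the same route as the paper: the argument preceding the proposition in Section~\ref{Sec_RepOfPropsAndBiHeyting} uses precisely the stagewise distributivity in the complete Boolean algebras $\Cp(\Sig_V)$ together with the adjoint functor theorem for posets to obtain both the Heyting and co-Heyting structures. You add more explicit detail on why the interior/closure prescriptions yield genuine clopen subobjects (using openness and continuity of the restriction maps), a point the paper records in Section~\ref{Sec_SigEtc} but does not spell out, so your write-up is if anything slightly more thorough on that step.
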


We give direct arguments for the following two facts (which also follow from the general theory of bi-Heyting algebras):
\begin{lemma}
For all $\ps S\in\Subcl{\Sig}$, we have $\neg\ps S\leq\;\sim\ps S$.
\end{lemma}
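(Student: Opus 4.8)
The plan is to derive the inequality purely from the defining properties of the two negations together with distributivity of the underlying lattice, which we already know holds: finite meets distribute over (arbitrary) joins in each Boolean algebra $\Cp(\Sig_V)$, and the lattice operations on $\Subcl{\Sig}$ are computed stagewise.

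Recall that $\neg\ps S$ is the largest element of $\Subcl{\Sig}$ with $\ps S\meet\neg\ps S=\ps 0$, while $\sim\ps S$ is the smallest element with $\ps S\join\sim\ps S=\Sig$. First I would write $\neg\ps S=\neg\ps S\meet\Sig$ and substitute $\Sig=\ps S\join\sim\ps S$. The distributive law then gives
\[
\neg\ps S=\neg\ps S\meet(\ps S\join\sim\ps S)=(\neg\ps S\meet\ps S)\join(\neg\ps S\meet\sim\ps S).
\]
Since $\neg\ps S\meet\ps S=\ps 0$, the first disjunct drops out and we obtain $\neg\ps S=\neg\ps S\meet\sim\ps S$, i.e.\ $\neg\ps S\leq\;\sim\ps S$, as claimed.

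There is essentially no obstacle here: the only ingredient beyond the two definitions is distributivity, which was established when the lattice structure on $\Subcl{\Sig}$ was introduced. As an alternative route one can argue via \eq{Eq_coImp}: if $\ps R\in\Subcl{\Sig}$ satisfies $\ps R\join\ps S=\Sig$, then $\neg\ps S=\neg\ps S\meet(\ps R\join\ps S)=(\neg\ps S\meet\ps R)\join(\neg\ps S\meet\ps S)=\neg\ps S\meet\ps R$, so $\neg\ps S\leq\ps R$; taking the meet over all such $\ps R$ yields $\neg\ps S\leq\;\sim\ps S$. I would present the first argument, since it makes the role of distributivity — the one place where one genuinely uses the bi-Heyting (rather than merely bounded-lattice) structure of $\Subcl{\Sig}$ — most transparent.
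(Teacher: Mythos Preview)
Your argument is correct, but it differs from the paper's. The paper argues stagewise: for each $V\in\VN$ one has $(\neg\ps S)_V\cap\ps S_V=\emptyset$, so $(\neg\ps S)_V\subseteq\Sig_V\setminus\ps S_V$, and $(\sim\ps S)_V\cup\ps S_V=\Sig_V$, so $(\sim\ps S)_V\supseteq\Sig_V\setminus\ps S_V$; combining these gives $(\neg\ps S)_V\subseteq(\sim\ps S)_V$ at every stage. Your proof instead stays at the global lattice level and uses only distributivity together with the two defining equations $\ps S\meet\neg\ps S=\ps 0$ and $\ps S\join\sim\ps S=\Sig$. This has the advantage of being a statement about arbitrary bi-Heyting algebras (indeed, the paper remarks that the lemma follows from the general theory), and only finite distributivity is needed, which any Heyting algebra provides. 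The paper's approach, on the other hand, yields slightly more: it shows that both $(\neg\ps S)_V$ and $(\sim\ps S)_V$ are sandwiched on opposite sides of the local Boolean complement $\Sig_V\setminus\ps S_V$, information that is reused later when discussing the strict inequality $\sim\ps S>\neg\ps S$ and the physical interpretation of the two negations.
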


\begin{proof}
For all $V\in\VN$, it holds that $(\neg\ps S)_V\subseteq\Sig_V\backslash\ps S_V$, since $(\neg\ps S\meet\ps S)_V=(\neg\ps S)_V\cap\ps S_V=\emptyset$, while $(\sim\ps S)_V\supseteq\Sig_V\backslash\ps S_V$ since $(\sim\ps S\join\ps S)_V=(\sim\ps S)_V\cup\ps S_V=\Sig_V$.
\end{proof}

The above lemma and the fact that $\neg\ps S$ is the largest subobject such that $\neg\ps S\meet\ps S=\ps 0$ imply
\begin{corollary}
In general, $\sim\ps S\meet\ps S\geq\ps 0$.
\end{corollary}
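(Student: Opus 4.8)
The inequality $\sim\ps S\meet\ps S\geq\ps 0$ is immediate, since $\ps 0$ is by definition the bottom element of the lattice $\Subcl{\Sig}$. The real content of the corollary is that this inequality is in general \emph{strict}, in contrast with the identity $\neg\ps S\meet\ps S=\ps 0$ enjoyed by the Heyting negation. So the plan is, first, to identify precisely when equality $\sim\ps S\meet\ps S=\ps 0$ can hold, and second, to exhibit a class of $\ps S$ for which it fails.

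For the first part I would invoke the universal property of $\neg$. Recall that $\neg\ps S=\bjoin\{\ps R\in\Subcl{\Sig}\mid\ps R\meet\ps S=\ps 0\}$ and that this supremum is attained, i.e.\ $\neg\ps S$ is the \emph{largest} clopen subobject disjoint from $\ps S$. Hence if $\sim\ps S\meet\ps S=\ps 0$, then $\sim\ps S$ belongs to the set whose supremum is $\neg\ps S$, so $\sim\ps S\leq\neg\ps S$; combining this with the previous lemma ($\neg\ps S\leq\;\sim\ps S$) gives $\sim\ps S=\neg\ps S$. Thus $\sim\ps S\meet\ps S=\ps 0$ can hold \emph{only} at those $\ps S$ where the two negations coincide.

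It then remains to produce an $\ps S$ with $\neg\ps S\neq\;\sim\ps S$. Here I would use the stagewise descriptions: by \eqref{Eq_HeytNegStagew} we have $(\neg\ps S)_V=\{\ld\in\Sig_V\mid\forall V'\subseteq V:\ld|_{V'}\notin\ps S_{V'}\}$, while $(\sim\ps S)_V\supseteq\Sig_V\backslash\ps S_V$ from the proof of the lemma. Since the restriction maps of $\Sig$ coarse-grain --- concretely, $\deo_{\N,V'}(\hP)\geq\hP$ with the projections $\deo_{\N,V'}(\hP)$ growing as $V'$ shrinks --- the components $\ps S_{V'}$ of $\ps S:=\ps\deo(\hP)$ become large at small contexts, so the requirement ``$\ld|_{V'}\notin\ps S_{V'}$ for \emph{all} $V'\subseteq V$'' is very stringent; for many $\hP$ with $\hat 0<\hP<\hat 1$ one in fact gets $\neg\ps S=\ps 0$ although $\sim\ps S\neq\ps 0$. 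For any such $\ps S$ we have $\neg\ps S\neq\;\sim\ps S$, whence, by the second paragraph, $\sim\ps S\meet\ps S>\ps 0$. (Detailed examples are deferred to section \ref{Sec_NegsAndRegs}.) I anticipate no genuine obstacle; the only step needing care is this last one, a routine stagewise computation of $\neg\ps\deo(\hP)$ from \eqref{Eq_HeytNegStagew} together with the coarse-graining behaviour of outer daseinisation.
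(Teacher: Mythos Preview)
Your proposal is correct and takes essentially the same approach as the paper: the paper's entire argument is the sentence preceding the corollary, namely that the lemma ($\neg\ps S\leq\;\sim\ps S$) together with the maximality of $\neg\ps S$ among subobjects disjoint from $\ps S$ forces $\sim\ps S\meet\ps S>\ps 0$ whenever $\sim\ps S\neq\neg\ps S$ --- which is exactly your second paragraph, contraposed. Your third paragraph (exhibiting explicit $\ps S$ with $\neg\ps S\neq\;\sim\ps S$) goes beyond what the paper does here; the paper simply notes afterward that a stronger statement holds and defers the verification to the stagewise computations of Section~\ref{Sec_NegsAndRegs}.
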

This means that the co-Heyting negation does not give a system in which a central axiom of most logical systems, viz. freedom from contradiction, holds. We have a glimpse of \emph{paraconsistent logic}.

In fact, a somewhat stronger result holds: for any von Neumann algebra except for $\bbC\hat 1=M_1(\bbC)$ and $M_2(\bbC)$, we have $\sim\ps S>\neg\ps S$ and $\sim\ps S\meet\ps S>\ps 0$ for all clopen subobjects except $\ps 0$ and $\Sig$. This follows easily from the representation of clopen subobjects as families of projections, see beginning of next section.

\section{Negations and regular elements}			\label{Sec_NegsAndRegs}
In this section, we will examine the Heyting negation $\neg$ and the co-Heyting negation $\sim$ more closely. We will determine regular elements with respect to the Heyting and the co-Heyting algebra structure.

Throughout, we will make use of the isomorphism $\alpha_V:\PV\ra\Cp(\Sig_V)$ (defined in (\ref{Eq_alphaV})) between the complete Boolean algebras of projections in an abelian von Neumann algebra $V$ and the clopen subsets of its spectrum $\Sig_V$. Given a projection $\hP\in\PV$, we will use the notation $S_{\hP}:=\alpha_V(\hP)$. Conversely, for $S\in\Cp(\Sig_V)$, we write $\hP_S:=\alpha_V^{-1}(S)$.

Given a clopen subobject $\ps S\in\Subcl{\Sig}$, it is useful to think of it as a collection of projections: consider
\begin{equation}
	(\hP_{\ps S_V})_{V\in\VN} = (\alpha_V(\ps S_V))_{V\in\VN},
\end{equation}
which consists of one projection for each context $V$. The fact that $\ps S$ is a subobject then translates to the fact that if $V'\subset V$, then $\hP_{\ps S_{V'}}\geq\hP_{\ps S_V}$. (This is another instance of coarse-graining.)

If $\ld\in\Sig_V$ and $\hP\in\PV$, then
\begin{equation}
			\ld(\hP)=\ld(\hP^2)=\ld(\hP)^2\in\{0,1\},
\end{equation}
where we used that $\hP$ is idempotent and that $\ld$ is multiplicative. 

\textbf{Heyting negation and Heyting-regular elements.} We consider the stagewise expression (see eq. (\ref{Eq_HeytNegStagew})) for the Heyting negation:
\begin{align}
		(\neg\ps S)_V &=\{\ld\in\Sig_V \mid \forall V'\subseteq V:\ld|_{V'}\notin \ps S_{V'}\}\\
		&= \{\ld\in\Sig_V \mid \forall V'\subseteq V:\ld|_{V'}(\hP_{\ps S_{V'}})=0 \}\\
		&= \{\ld\in\Sig_V \mid \forall V'\subseteq V:\ld(\hP_{\ps S_{V'}})=0 \}\\
		&= \{\ld\in\Sig_V \mid \ld(\bjoin_{V'\subseteq V}\hP_{\ps S_{V'}})=0 \}
\end{align}
As we saw above, the smaller the context $V'$, the larger the associated projection $\hP_{\ps S_{V'}}$. Hence, for the join in the above expression, only the \emph{minimal} contexts $V'$ contained in $V$ are relevant. A minimal context is generated by a single projection $\hQ$ and the identity,
\begin{equation}
			V_{\hQ}:=\{\hQ,\hat 1\}''=\bbC\hQ+\bbC\hat 1.
\end{equation}
Here, it becomes important that we excluded the trivial context $V_0=\{\hat 1\}''=\bbC\hat 1$. Let
\begin{equation}
			m_V:=\{V'\subseteq V \mid V'\text{ minimal}\}=\{V_{\hQ} \mid \hQ\in\PV\}.
\end{equation}
We obtain
\begin{align}
			(\neg\ps S)_V &= \{\ld\in\Sig_V \mid \ld(\bjoin_{V'\in m_V}\hP_{\ps S_{V'}})=0 \}\\
			&= \{\ld\in\Sig_V \mid \ld(\hat 1-\bjoin_{V'\in m_V}\hP_{\ps S_{V'}})=1 \}\\
			&= S_{\hat 1-\bjoin_{V'\in m_V}\hP_{\ps S_{V'}}}.
\end{align}
This shows:
\begin{proposition}			\label{Prop_HeytingNegLocal}
Let $\ps S\in\Subcl{\Sig}$, and let $V\in\VN$. Then
\begin{equation}
			\hP_{(\neg\ps S)_V}=\hat 1-\bjoin_{V'\in m_V}\hP_{\ps S_{V'}},
\end{equation}
where $m_V=\{V'\subseteq V \mid V'\text{ minimal}\}$.
\end{proposition}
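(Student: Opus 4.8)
The plan is to unwind the stagewise formula \eqref{Eq_HeytNegStagew} for the Heyting negation and translate it, context by context, into the language of projections by means of the Boolean isomorphisms $\alpha_V$. Fix $\ps S\in\Subcl{\Sig}$ and $V\in\VN$. First I would rewrite the condition ``$\ld|_{V'}\notin\ps S_{V'}$'' occurring in \eqref{Eq_HeytNegStagew}: since $\ps S_{V'}=\alpha_{V'}(\hP_{\ps S_{V'}})=\{\ld'\in\Sig_{V'}\mid\ld'(\hP_{\ps S_{V'}})=1\}$, since $\ld|_{V'}(\hP_{\ps S_{V'}})=\ld(\hP_{\ps S_{V'}})$ because $\hP_{\ps S_{V'}}\in V'\subseteq V$, and since $\ld(\hQ)\in\{0,1\}$ for any projection $\hQ$, the condition is equivalent to $\ld(\hP_{\ps S_{V'}})=0$, i.e.\ to $\ld\notin S_{\hP_{\ps S_{V'}}}$, where from now on all projections $\hP_{\ps S_{V'}}$ are read inside $\PV$ and $S_{\hP}:=\{\ld\in\Sig_V\mid\ld(\hP)=1\}=\alpha_V(\hP)$. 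Thus $(\neg\ps S)_V=\Sig_V\setminus\bigcup_{V'\subseteq V}S_{\hP_{\ps S_{V'}}}$.

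The second step is to reduce the union over all $V'\subseteq V$ to a union over minimal contexts. Because $\ps S$ is a subobject, $V''\subseteq V'$ forces $\hP_{\ps S_{V''}}\geq\hP_{\ps S_{V'}}$ (the coarse-graining property noted at the start of the section), hence $S_{\hP_{\ps S_{V''}}}\supseteq S_{\hP_{\ps S_{V'}}}$. Moreover every $V'\in\VN$ with $V'\subseteq V$ contains a minimal context: as $V'\neq\bbC\hat 1$ and a von Neumann algebra is generated by its projections, $V'$ contains a projection $\hQ\notin\{\hat 0,\hat 1\}$, and then $V_{\hQ}=\{\hQ,\hat 1\}''\subseteq V'$ lies in $m_V$. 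Hence in the union every non-minimal $V'$ is absorbed by one of its minimal subcontexts, so $(\neg\ps S)_V=\Sig_V\setminus\bigcup_{V'\in m_V}S_{\hP_{\ps S_{V'}}}$.

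Finally I would translate back through $\alpha_V$. Under the isomorphism $\alpha_V:\PV\ra\Cp(\Sig_V)$ of complete Boolean algebras, the join $\bjoin_{V'\in m_V}\hP_{\ps S_{V'}}$ corresponds to $\bjoin_{V'\in m_V}S_{\hP_{\ps S_{V'}}}$ computed in $\Cp(\Sig_V)$, i.e.\ to $\on{cl}\big(\bigcup_{V'\in m_V}S_{\hP_{\ps S_{V'}}}\big)$. Now $\bigcup_{V'\in m_V}S_{\hP_{\ps S_{V'}}}=\Sig_V\setminus(\neg\ps S)_V$ is \emph{closed}, since $\neg\ps S=(\ps S\Rightarrow\ps 0)$ is by construction an element of $\Subcl{\Sig}$ and hence its component $(\neg\ps S)_V$ is clopen. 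Therefore the closure above is redundant and $S_{\bjoin_{V'\in m_V}\hP_{\ps S_{V'}}}=\bigcup_{V'\in m_V}S_{\hP_{\ps S_{V'}}}=\Sig_V\setminus(\neg\ps S)_V$, whence $(\neg\ps S)_V=\Sig_V\setminus S_{\bjoin_{V'\in m_V}\hP_{\ps S_{V'}}}=S_{\hat 1-\bjoin_{V'\in m_V}\hP_{\ps S_{V'}}}$. Applying $\alpha_V^{-1}$ gives $\hP_{(\neg\ps S)_V}=\hat 1-\bjoin_{V'\in m_V}\hP_{\ps S_{V'}}$, as claimed. The only point that is more than bookkeeping is exactly this use of the clopenness of $(\neg\ps S)_V$ to kill the closure (equivalently, to pass from the set-theoretic union to the join in $\Cp(\Sig_V)$); everything else is a routine assembly of \eqref{Eq_HeytNegStagew}, the maps $\alpha_V$, subobject monotonicity, and the observation that minimal contexts dominate the relevant family.
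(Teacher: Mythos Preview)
Your proof is correct and follows essentially the same route as the paper: unwind \eqref{Eq_HeytNegStagew} via the isomorphisms $\alpha_V$, then use subobject monotonicity to restrict attention to minimal subcontexts. The one place where you are actually \emph{more} careful than the paper is the passage from the set-theoretic union $\bigcup_{V'}S_{\hP_{\ps S_{V'}}}$ to the lattice join $\bjoin_{V'}\hP_{\ps S_{V'}}$: the paper simply writes $\{\ld:\forall V'\;\ld(\hP_{\ps S_{V'}})=0\}=\{\ld:\ld(\bjoin_{V'}\hP_{\ps S_{V'}})=0\}$, which tacitly assumes the union is already closed, whereas you supply a reason (clopenness of $(\neg\ps S)_V$, granted because $\neg\ps S\in\Subcl{\Sig}$ by construction of the Heyting adjoint).
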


We can now consider double negation: $(\neg\neg\ps S)_V=S_{\hat 1-\bjoin_{V'\in m_V}\hP_{(\neg\ps S)_{V'}}}$, so
\begin{equation}
			\hP_{(\neg\neg\ps S)_V}=\hat 1-\bjoin_{V'\in m_V}\hP_{(\neg\ps S)_{V'}}.
\end{equation}
For a $V'\in m_V$, we have $\hP_{(\neg\ps S)_{V'}}=\hat 1-\bjoin_{W\in m_{V'}}\hP_{\ps S_{W}}$, but $m_{V'}=\{V'\}$, since $V'$ is minimal, so $\hP_{(\neg\ps S)_{V'}}=\hat 1-\hP_{\ps S_{V'}}$. Thus,
\begin{equation}			\label{Eq_DoubleHeytNegStagew}
			\hP_{(\neg\neg\ps S)_V}=\hat 1-\bjoin_{V'\in m_V}(\hat 1-\hP_{\ps S_{V'}})=\bmeet_{V'\in m_V}\hP_{\ps S_{V'}}.
\end{equation}
Since $\hP_{\ps S_{V'}}\geq\hP_{\ps S_V}$ for all $V'\in m_V$ (because $\ps S$ is a subobject), we have
\begin{equation}			\label{Eq_DoubleHeytNegBigger}
			\hP_{(\neg\neg\ps S)_V}=\bmeet_{V'\in m_V}\hP_{\ps S_{V'}}\geq\hP_{\ps S_V}
\end{equation}
for all $V\in\VN$, so $\neg\neg\ps S\geq\ps S$ as expected. We have shown:
\begin{proposition}			\label{Prop_HeytingReg}
An element $\ps S$ of $\Subcl{\Sig}$ is Heyting-regular, i.e., $\neg\neg\ps S=\ps S$, if and only if for all $V\in\VN$, it holds that
\begin{equation}
			\hP_{\ps S_V}=\bmeet_{V'\in m_V}\hP_{\ps S_{V'}},
\end{equation}
where $m_V=\{V'\subseteq V \mid V'\text{ minimal}\}$.
\end{proposition}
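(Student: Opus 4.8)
The plan is to assemble the stagewise computation that immediately precedes the statement; almost no new work is needed. First I would record the elementary fact that, by definition of the partial order on $\Subcl{\Sig}$, two clopen subobjects coincide if and only if their components agree at every context, and — since each $\alpha_V\colon\PV\to\Cp(\Sig_V)$ is an isomorphism of complete Boolean algebras — this is equivalent to the equality $\hP_{\ps S_V}=\hP_{\ps T_V}$ of the associated projections for all $V\in\VN$. Applying this with $\ps T=\neg\neg\ps S$ reduces the condition $\neg\neg\ps S=\ps S$ to: $\hP_{(\neg\neg\ps S)_V}=\hP_{\ps S_V}$ for every $V\in\VN$.

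Second, I would plug in the closed form for the left-hand side. Proposition~\ref{Prop_HeytingNegLocal} gives $\hP_{(\neg\ps S)_{V'}}=\hat 1-\bjoin_{W\in m_{V'}}\hP_{\ps S_W}$; when $V'$ is minimal one has $m_{V'}=\{V'\}$, whence $\hP_{(\neg\ps S)_{V'}}=\hat 1-\hP_{\ps S_{V'}}$. Substituting this back into Proposition~\ref{Prop_HeytingNegLocal} (applied now to $\neg\neg\ps S=\neg(\neg\ps S)$) and performing one de Morgan step in the complete Boolean algebra $\PV$ yields $\hP_{(\neg\neg\ps S)_V}=\hat 1-\bjoin_{V'\in m_V}(\hat 1-\hP_{\ps S_{V'}})=\bmeet_{V'\in m_V}\hP_{\ps S_{V'}}$, which is precisely equation~\eqref{Eq_DoubleHeytNegStagew}. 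Combining with the first step, $\ps S$ is Heyting-regular exactly when $\hP_{\ps S_V}=\bmeet_{V'\in m_V}\hP_{\ps S_{V'}}$ for all $V\in\VN$, as claimed. Alternatively, invoking equation~\eqref{Eq_DoubleHeytNegBigger}, which already gives $\neg\neg\ps S\geq\ps S$ unconditionally, regularity is equivalent to the reverse inequality holding at each stage, i.e.\ to the displayed equality.

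I expect no genuine obstacle here, since the real content sits in Proposition~\ref{Prop_HeytingNegLocal} and is already established. The one point that does require care — and which has been handled in the derivation preceding the statement — is the reduction of the join $\bjoin_{V'\subseteq V}\hP_{\ps S_{V'}}$ to the join over the minimal subcontexts $m_V$. That reduction relies on two ingredients: the antitonicity $V'\subseteq V''\Rightarrow\hP_{\ps S_{V'}}\geq\hP_{\ps S_{V''}}$, which holds because $\ps S$ is a subobject of $\Sig$, and the existence of at least one minimal subcontext below every $V\in\VN$, which is exactly why the trivial algebra $\bbC\hat 1$ was excluded from $\VN$. With these in place, together with the identity $m_{V'}=\{V'\}$ for minimal $V'$, the proposition follows formally from the two-fold application of Proposition~\ref{Prop_HeytingNegLocal}.
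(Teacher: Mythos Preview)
Your proposal is correct and follows exactly the paper's own argument: the proposition is stated immediately after the computation \eqref{Eq_DoubleHeytNegStagew}--\eqref{Eq_DoubleHeytNegBigger} with the phrase ``We have shown'', and that computation is precisely the two-fold application of Proposition~\ref{Prop_HeytingNegLocal} together with $m_{V'}=\{V'\}$ for minimal $V'$ that you describe. If anything, you are slightly more explicit than the paper in spelling out why equality of clopen subobjects reduces to stagewise equality of the associated projections via the isomorphisms $\alpha_V$.
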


\begin{definition}
A clopen subobject $\ps S\in\Subcl{\Sig}$ is called \textbf{tight} if 
\begin{equation}
			\Sig(i_{V'V})(\ps S_V)=\ps S_{V'}
\end{equation}
for all $V',V\in\VN$ such that $V'\subseteq V$.
\end{definition}
For arbitrary subobjects, we only have $\Sig(i_{V'V})(\ps S_V)\subseteq\ps S_{V'}$. Let $\ps S\in\Subcl{\Sig}$ be an arbitrary clopen subobject, and let $V,V'\in\VN$ such that $V'\subset V$. Then
$\Sig(i_{V'V})(\ps S_V)\subseteq\ps S_{V'}\subseteq\Sig_{V'}$, so $\hP_{\Sig(i_{V'V})(\ps S_V)}\in\mc P(V')$. Thm. 3.1 in \cite{DI(2)} shows that
\begin{equation}
				\hP_{\Sig(i_{V'V})(\ps S_V)}=\deo_{V,V'}(\hP_{\ps S_V}).
\end{equation}
This key formula relates the restriction maps $\Sig(i_{V'V}):\Sig_V\ra\Sig_{V'}$ of the spectral presheaf to the maps $\deo_{V,V'}:\PV\ra\mc P(V')$. Using this, we see that
\begin{proposition}
A clopen subobject $\ps S\in\Subcl{\Sig}$ is tight if and only if $\hP_{\ps S_{V'}}=\deo_{V,V'}(\hP_{\ps S_V})$ for all $V',V\in\VN$ such that $V'\subseteq V$.
\end{proposition}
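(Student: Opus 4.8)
The plan is to prove the characterisation of tightness by combining the definition of a tight subobject with the key formula $\hP_{\Sig(i_{V'V})(\ps S_V)}=\deo_{V,V'}(\hP_{\ps S_V})$ (cited from Thm.~3.1 of \cite{DI(2)}) together with the fact that $\alpha_V$ is an order isomorphism between $\mc P(V)$ and $\Cl{\Sig_V}$. The point is that all the relevant subsets are clopen, so passing back and forth between projections and clopen subsets via $\alpha_V$ (equivalently, via $S_{\hat P}$ and $\hat P_S$) loses no information and is order-preserving in both directions.

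First I would fix $V',V\in\VN$ with $V'\subseteq V$ and record that, since $\ps S$ is a subobject, $\Sig(i_{V'V})(\ps S_V)\subseteq\ps S_{V'}$ always holds, so tightness at the pair $(V',V)$ is precisely the reverse inclusion $\ps S_{V'}\subseteq\Sig(i_{V'V})(\ps S_V)$, i.e.\ equality of these two clopen subsets of $\Sig_{V'}$. Next I would apply $\alpha_{V'}^{-1}$: since $\alpha_{V'}$ is an order isomorphism, $\Sig(i_{V'V})(\ps S_V)=\ps S_{V'}$ holds if and only if $\hP_{\Sig(i_{V'V})(\ps S_V)}=\hP_{\ps S_{V'}}$ in $\mc P(V')$. (Here I use that $\Sig(i_{V'V})(\ps S_V)$ is indeed clopen, which is guaranteed because the restriction maps $\Sig(i_{V'V})$ are open and closed, as noted after the definition of $\Sig$; hence $\hP_{\Sig(i_{V'V})(\ps S_V)}$ is well defined.) Now I substitute the key formula to rewrite the left-hand side as $\deo_{V,V'}(\hP_{\ps S_V})$, obtaining that tightness at $(V',V)$ is equivalent to $\hP_{\ps S_{V'}}=\deo_{V,V'}(\hP_{\ps S_V})$.

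Finally I would quantify over all pairs $V'\subseteq V$: $\ps S$ is tight, by definition, iff $\Sig(i_{V'V})(\ps S_V)=\ps S_{V'}$ for all such pairs, which by the stepwise equivalence above holds iff $\hP_{\ps S_{V'}}=\deo_{V,V'}(\hP_{\ps S_V})$ for all such pairs, giving the claimed characterisation. There is no real obstacle here; the only thing to be slightly careful about is the clopenness point, namely that $\Sig(i_{V'V})(\ps S_V)$ is a clopen subset of $\Sig_{V'}$ so that the notation $\hP_{(-)}$ applies to it and the key formula of \cite{DI(2)} can be invoked verbatim. This is immediate from the fact, already stated in the excerpt, that the spectral presheaf's restriction maps are open and closed. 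Everything else is a direct application of the order isomorphism $\alpha_{V'}$ and the substitution of the cited formula, so the argument is short.
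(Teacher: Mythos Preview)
Your proposal is correct and follows exactly the approach the paper takes: the paper simply records the key formula $\hP_{\Sig(i_{V'V})(\ps S_V)}=\deo_{V,V'}(\hP_{\ps S_V})$ from Thm.~3.1 of \cite{DI(2)} and then states the proposition with the words ``Using this, we see that'', leaving the translation via the isomorphism $\alpha_{V'}$ implicit. You have merely spelled out those implicit steps (including the clopenness check), so there is nothing to add or correct.
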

It is clear that all clopen subobjects of the form $\ps\deo(\hP)$, $\hP\in\PN$, are tight (see Def. \ref{Def_OuterDas}).
\begin{proposition}
For a tight subobject $\ps S\in\Subcl{\Sig}$, it holds that $\neg\neg\ps S=\ps S$, i.e., tight subobjects are Heyting-regular.
\end{proposition}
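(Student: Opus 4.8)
The plan is to reduce the claim to a single meet identity in the complete Boolean algebra $\PV$ and then settle it by exhibiting one convenient minimal subcontext. Fix a tight subobject $\ps S\in\Subcl{\Sig}$ and a context $V\in\VN$, and write $\hP:=\hP_{\ps S_V}\in\PV$. By Prop.~\ref{Prop_HeytingReg}, $\ps S$ is Heyting-regular exactly when $\hP_{\ps S_V}=\bmeet_{V'\in m_V}\hP_{\ps S_{V'}}$ for every $V\in\VN$, where $m_V=\{V'\subseteq V\mid V'\text{ minimal}\}$. Since $\ps S$ is tight, the preceding characterisation of tightness gives $\hP_{\ps S_{V'}}=\deo_{V,V'}(\hP)$ for all $V'\in m_V$, so it is enough to show
\[
\hP=\bmeet_{V'\in m_V}\deo_{V,V'}(\hP).
\]

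For the inequality $\geq$ I would just note that $\deo_{V,V'}(\hP)=\bmeet\{\hQ\in\mc P(V')\mid\hQ\geq\hP\}\geq\hP$ for every $V'$. For the reverse inequality, I would first dispose of the trivial cases: if $\hP=\hat 1$ the meet is trivially $\leq\hat 1=\hP$, and if $\hP=\hat 0$ then each $\deo_{V,V'}(\hat 0)=\hat 0$ while $m_V\neq\emptyset$ (because $V\neq V_0=\bbC\hat 1$ forces $V$ to contain a minimal subcontext), so the meet equals $\hat 0=\hP$. In the remaining case $\hat 0\neq\hP\neq\hat 1$, the single-projection context $V_{\hP}:=\{\hP,\hat 1\}''=\bbC\hP+\bbC\hat 1$ is a genuine element of $\VN$ contained in $V$, hence $V_{\hP}\in m_V$; its projection lattice is $\{\hat 0,\hP,\hat 1-\hP,\hat 1\}$, and since the smallest projection in this set that dominates $\hP$ is $\hP$ itself, we get $\deo_{V,V_{\hP}}(\hP)=\hP$. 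Hence $\bmeet_{V'\in m_V}\deo_{V,V'}(\hP)\leq\deo_{V,V_{\hP}}(\hP)=\hP$, and together with $\geq$ this gives the identity. Since $V\in\VN$ was arbitrary, Prop.~\ref{Prop_HeytingReg} yields $\neg\neg\ps S=\ps S$.

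I do not expect a genuine obstacle here; the only point that needs care is the bookkeeping around the excluded trivial context $V_0$ and the trivial projections $\hat 0,\hat 1$. That bookkeeping is in fact the crux: the argument works precisely because $V_0$ has been excluded from $\VN$ (the convention adopted in Section~\ref{Sec_SigEtc}), which is exactly what makes the single-projection context $V_{\hP}$ a legitimate minimal subcontext of $V$ and forces the meet to collapse to $\hP$. As an alternative one can bypass Prop.~\ref{Prop_HeytingReg} altogether and argue directly from the stagewise double-negation formula (\ref{Eq_DoubleHeytNegStagew}), $\hP_{(\neg\neg\ps S)_V}=\bmeet_{V'\in m_V}\hP_{\ps S_{V'}}$, combined with the tightness identity $\hP_{\ps S_{V'}}=\deo_{V,V'}(\hP_{\ps S_V})$, to obtain $\hP_{(\neg\neg\ps S)_V}=\hP_{\ps S_V}$ for all $V\in\VN$.
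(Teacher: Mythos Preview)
Your proof is correct and follows essentially the same route as the paper: both arguments pivot on the minimal subcontext $V_{\hP}=\{\hP_{\ps S_V},\hat 1\}''$, where tightness forces $\deo_{V,V_{\hP}}(\hP_{\ps S_V})=\hP_{\ps S_V}$ and hence collapses the meet $\bmeet_{V'\in m_V}\hP_{\ps S_{V'}}$ down to $\hP_{\ps S_V}$. Your treatment is in fact slightly more careful than the paper's, since you explicitly dispose of the degenerate cases $\hP_{\ps S_V}\in\{\hat 0,\hat 1\}$ in which $V_{\hP}$ would coincide with the excluded trivial context $V_0$; the paper's proof tacitly assumes $\hP_{\ps S_V}$ is nontrivial.
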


\begin{proof}
We saw in equation (\ref{Eq_DoubleHeytNegStagew}) that $\hP_{(\neg\neg\ps S)_V}=\bmeet_{V'\in m_V}\hP_{\ps S_{V'}}$ for all $V\in\VN$. Moreover, $\hP_{(\neg\neg\ps S)_V}\geq\hP_{\ps S_V}$ from equation (\ref{Eq_DoubleHeytNegBigger}). Consider the minimal subalgebra $V_{\hP_{\ps S_V}}=\{\hP_{\ps S_V},\hat 1\}''$ of $V$. Then, since $\ps S$ is tight, we have
\begin{equation}
			\deo_{V,V_{\hP_{\ps S_V}}}(\hP_{\ps S_V})=\bmeet\{\hQ\in\mc P(V_{\hP_{\ps S_V}}) \mid \hQ\geq\hP_{\ps S_V}\}=\hP_{\ps S_V},
\end{equation}
so, for all $V\in\VN$,
\begin{equation}
			\hP_{(\neg\neg\ps S)_V}=\bmeet_{V'\in m_V}\hP_{\ps S_{V'}}=\hP_{\ps S_V}.
\end{equation} 
\end{proof}

\begin{corollary}
Outer daseinisation $\ps\deo:\PN\ra\Subcl{\Sig}$ maps projections into the Heyting-regular elements of $\Subcl{\Sig}$.
\end{corollary}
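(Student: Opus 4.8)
The plan is to combine the two propositions immediately preceding the corollary: the fact that every clopen subobject of the form $\ps\deo(\hP)$ is tight (stated just before the proposition ``For a tight subobject $\ps S\in\Subcl{\Sig}$, it holds that $\neg\neg\ps S=\ps S$''), and the fact that tight subobjects are Heyting-regular (that same proposition). First I would observe that, by Definition \ref{Def_OuterDas}, the component of $\ps\deo(\hP)$ at $V$ is $\alpha_V(\deo_{\N,V}(\hP))$, so $\hP_{(\ps\deo(\hP))_V}=\deo_{\N,V}(\hP)$. To check tightness using the preceding proposition, I need $\hP_{(\ps\deo(\hP))_{V'}}=\deo_{V,V'}(\hP_{(\ps\deo(\hP))_V})$ for all $V'\subseteq V$, i.e. $\deo_{\N,V'}(\hP)=\deo_{V,V'}(\deo_{\N,V}(\hP))$; this is exactly the composition identity $\deo_{\N,V'}=\deo_{V,V'}\circ\deo_{\N,V}$ noted in section \ref{Sec_SigEtc} (stated there for the chain $W\subset V'\subset V$, and applying here with $\N$ in the role of the largest algebra). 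Hence $\ps\deo(\hP)$ is tight.

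Then I would simply invoke the proposition that tight subobjects satisfy $\neg\neg\ps S=\ps S$, i.e. are Heyting-regular, to conclude $\neg\neg\,\ps\deo(\hP)=\ps\deo(\hP)$ for every $\hP\in\PN$. Since this holds for all projections, the image $\ps\deo(\PN)$ is contained in the set of Heyting-regular elements of $\Subcl{\Sig}$, which is the assertion of the corollary.

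I do not expect any real obstacle here: the statement is a direct corollary of the two results just proven, and the only thing to spell out is the identification $\hP_{(\ps\deo(\hP))_V}=\deo_{\N,V}(\hP)$ together with the composition law for the $\deo$ maps, both of which are already available in the excerpt. If one wanted a self-contained argument avoiding the detour through tightness, one could instead plug $\hP_{\ps S_V}=\deo_{\N,V}(\hP)$ directly into Proposition \ref{Prop_HeytingReg} and verify $\deo_{\N,V}(\hP)=\bmeet_{V'\in m_V}\deo_{\N,V'}(\hP)$; the inequality $\leq$ is immediate since each $\deo_{\N,V'}(\hP)\geq\deo_{\N,V}(\hP)$, and the reverse inequality follows because $V_{\deo_{\N,V}(\hP)}$ is itself a minimal subalgebra below $V$ on which $\deo$ returns $\deo_{\N,V}(\hP)$ unchanged — this is the same computation carried out in the proof of the tightness proposition. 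Either route is short; I would present the first as it reuses the machinery already set up.
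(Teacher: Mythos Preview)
Your proposal is correct and follows exactly the paper's approach: the corollary is stated in the paper without explicit proof, as an immediate consequence of the remark that all subobjects of the form $\ps\deo(\hP)$ are tight together with the preceding proposition that tight subobjects are Heyting-regular. Your write-up actually supplies more detail than the paper does (spelling out the composition identity $\deo_{\N,V'}=\deo_{V,V'}\circ\deo_{\N,V}$ to verify tightness), which is fine and entirely in line with the intended argument.
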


We remark that in order to be Heyting-regular, an element $\ps S\in\Subcl{\Sig}$ need not be tight.

\textbf{Co-Heyting negation and co-Heyting regular elements.} For any $\ps S\in\Subcl{\Sig}$, by its defining property $\sim\ps S$ is the smallest element of $\Subcl{\Sig}$ such that $\ps S\join\sim\ps S=\Sig$.

Let $V$ be a maximal context, i.e., a maximal abelian subalgebra (masa) of the non-abelian von Neumann algebra $\N$. Then clearly
\begin{equation}
			(\sim\ps S)_V=\Sig_V\backslash\ps S_V.
\end{equation}
Let $V\in\VN$, not necessarily maximal. We define
\begin{equation}
			M_V:=\{\tilde V\supseteq V \mid \tilde V\text{ maximal}\}.
\end{equation}
\begin{proposition}			\label{Prop_CoHeytingNegLocal}
Let $\ps S\in\Subcl{\Sig}$, and let $V\in\VN$. Then
\begin{equation}
			\hP_{(\sim\ps S)_V}=\bjoin_{\tilde V\in M_V}(\deo_{\tilde V,V}(\hat 1-\hat P_{\ps S_{\tilde V}})),
\end{equation}
where $M_V=\{\tilde V\supseteq V \mid \tilde V\text{ maximal}\}$.
\end{proposition}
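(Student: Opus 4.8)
The plan is to mirror the derivation of Proposition~\ref{Prop_HeytingNegLocal} for the Heyting negation, but now working "upwards" towards maximal contexts rather than "downwards" towards minimal ones. Recall that $\sim\ps S$ is characterised as the smallest clopen subobject with $\ps S\join(\sim\ps S)=\Sig$, and that joins in $\Subcl{\Sig}$ are computed stagewise as closures of unions; since the components are clopen and $\Sig_V$ is compact, the closure is harmless and $(\ps S\join\sim\ps S)_V=\ps S_V\cup(\sim\ps S)_V$. The key structural fact is that $\ps S$ (and likewise any clopen subobject) satisfies $\hP_{\ps S_{V'}}\geq\hP_{\ps S_V}$ whenever $V'\subseteq V$, so the projections $\hP_{\ps S_{\tilde V}}$ shrink as $\tilde V$ grows. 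Dually to the Heyting case, where only minimal subcontexts mattered, here only the \emph{maximal} supercontexts $\tilde V\in M_V$ carry information: the complements $\hat 1-\hP_{\ps S_{\tilde V}}$ are largest exactly at maximal $\tilde V$.

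First I would treat a maximal context $\tilde V$: here $M_{\tilde V}=\{\tilde V\}$, and since $\ps S_{\tilde V}$ is clopen its set-theoretic complement $\Sig_{\tilde V}\backslash\ps S_{\tilde V}$ is again clopen, with $\hP_{\Sig_{\tilde V}\backslash\ps S_{\tilde V}}=\hat 1-\hP_{\ps S_{\tilde V}}=\deo_{\tilde V,\tilde V}(\hat 1-\hP_{\ps S_{\tilde V}})$, matching the claimed formula. Next I would define a candidate clopen subobject $\ps T$ by $\hP_{\ps T_V}:=\bjoin_{\tilde V\in M_V}\deo_{\tilde V,V}(\hat 1-\hP_{\ps S_{\tilde V}})$ for every $V$, and check three things: (i) $\ps T$ is a genuine subobject, i.e.\ $\hP_{\ps T_{V'}}\geq\hP_{\ps T_V}$ for $V'\subseteq V$ --- this uses that $M_{V'}\supseteq M_V$ and that $\deo_{\tilde V,V'}=\deo_{V,V'}\circ\deo_{\tilde V,V}$ is order-preserving, together with the fact that $\deo$ applied to a fixed projection only grows as the target context shrinks; (ii) $\ps S\join\ps T=\Sig$, equivalently $\hP_{\ps S_V}\join\hP_{\ps T_V}=\hat 1$ for all $V$ --- for this, note that for any maximal $\tilde V\supseteq V$ we have $\hP_{\ps S_V}\geq\hP_{\ps S_{\tilde V}}$, hence applying the join-preserving map $\deo_{\tilde V,V}$ to the identity $\hP_{\ps S_{\tilde V}}\join(\hat 1-\hP_{\ps S_{\tilde V}})=\hat 1$ in $\mc P(\tilde V)$ gives $\deo_{\tilde V,V}(\hP_{\ps S_{\tilde V}})\join\deo_{\tilde V,V}(\hat 1-\hP_{\ps S_{\tilde V}})=\hat 1$, and since $\hP_{\ps S_V}\geq\hP_{\ps S_{\tilde V}}\geq\deo_{\tilde V,V}(\hP_{\ps S_{\tilde V}})$ is false in general I must instead use $\deo_{\tilde V,V}(\hP_{\ps S_{\tilde V}})\leq\hP_{\ps S_V}$, which does hold because $\hP_{\ps S_{\tilde V}}\leq\hP_{\ps S_V}\in\mc P(V)$ forces $\deo_{\tilde V,V}(\hP_{\ps S_{\tilde V}})\leq\hP_{\ps S_V}$; and (iii) $\ps T$ is the \emph{smallest} such subobject.

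For minimality I would argue that if $\ps R\in\Subcl{\Sig}$ satisfies $\ps S\join\ps R=\Sig$, then at each maximal $\tilde V$ we need $\hP_{\ps R_{\tilde V}}\geq\hat 1-\hP_{\ps S_{\tilde V}}$, and since $\ps R$ is a subobject, $\hP_{\ps R_V}\geq\deo_{\tilde V,V}(\hP_{\ps R_{\tilde V}})\geq\deo_{\tilde V,V}(\hat 1-\hP_{\ps S_{\tilde V}})$ for every $\tilde V\in M_V$; taking the join over $M_V$ gives $\hP_{\ps R_V}\geq\hP_{\ps T_V}$, i.e.\ $\ps R\geq\ps T$. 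Combined with (i) and (ii), this identifies $\ps T$ with $\sim\ps S$. I expect the main obstacle to be the interplay between the subobject condition on $\ps R$ and the $\deo$ maps: I must use repeatedly that $\deo_{V,V'}$ is the \emph{left} adjoint of the inclusion $\mc P(V')\hookrightarrow\mc P(V)$, so that $\hP\in\mc P(V)$ and $\hP\geq\hat Q$ with $\hat Q\in\mc P(V')$ imply $\hP\geq\deo_{V,V'}(\hat Q)$ is the wrong direction --- rather $\deo$ moves things \emph{up}, and the correct chain is $\hP_{\ps R_V}\supseteq\Sig(i_{V\tilde V})(\ps R_{\tilde V})$ translating via Thm.~3.1 of \cite{DI(2)} into $\hP_{\ps R_V}\geq\deo_{\tilde V,V}(\hP_{\ps R_{\tilde V}})$, which \emph{is} the direction I need. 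Getting these adjunction directions exactly right, and checking that the stagewise-defined $\ps T$ is genuinely clopen (it is, being $\alpha_V$ of a projection in $\mc P(V)$), is the delicate bookkeeping; everything else is a routine transcription of the Heyting-side argument with $\meet\leftrightarrow\join$, $\hat 0\leftrightarrow\hat 1$, and $m_V\leftrightarrow M_V$ interchanged.
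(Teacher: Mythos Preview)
Your strategy is correct and matches the paper's proof: define the candidate $\ps T$ by the stated formula, verify that $\ps S\join\ps T=\Sig$ stagewise, and use minimality of $\sim\ps S$ together with the subobject condition to conclude $\sim\ps S=\ps T$. The paper compresses your steps (i) and (iii) into a single observation---since $\sim\ps S$ is already known to be a clopen subobject equal to $\Sig_{\tilde V}\backslash\ps S_{\tilde V}$ at each maximal $\tilde V$, the subobject constraint forces $\hP_{(\sim\ps S)_V}\geq\bjoin_{\tilde V\in M_V}\deo_{\tilde V,V}(\hat 1-\hP_{\ps S_{\tilde V}})$ directly---and then only checks your step (ii); so your version is slightly more explicit in verifying that the candidate really lies in $\Subcl{\Sig}$, which the paper leaves implicit. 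Your argument for (ii) via join-preservation of $\deo_{\tilde V,V}$ applied to $\hP_{\ps S_{\tilde V}}\join(\hat 1-\hP_{\ps S_{\tilde V}})=\hat 1$, combined with $\deo_{\tilde V,V}(\hP_{\ps S_{\tilde V}})\leq\hP_{\ps S_V}$, is a valid alternative to the paper's simpler observation that $\hP_{\ps S_V}\geq\hP_{\ps S_{\tilde V}}$ and $\hP_{\ps T_V}\geq\hat 1-\hP_{\ps S_{\tilde V}}$ already sum to $\hat 1$. The mid-proof hesitation about adjunction directions should be excised in a clean write-up, but your final chain $\hP_{\ps R_V}\geq\deo_{\tilde V,V}(\hP_{\ps R_{\tilde V}})\geq\deo_{\tilde V,V}(\hat 1-\hP_{\ps S_{\tilde V}})$ is exactly right.
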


\begin{proof}
$\sim\ps S$ is a (clopen) subobject, so we must have
\begin{equation}
			\hP_{(\sim\ps S)_V}\geq\bjoin_{\tilde V\in M_V}(\deo_{\tilde V,V}(\hat 1-\hat P_{\ps S_{\tilde V}})),
\end{equation}
since $(\sim\ps S)_V$, the component at $V$, must contain all the restrictions of the components $(\sim\ps S)_{\tilde V}$ for $\tilde V\in M_V$ (and the above inequality expresses this using the corresponding projections).

On the other hand, $\sim\ps S$ is the \emph{smallest} clopen subobject such that $\newblock{\ps S\join\sim\ps S}=\Sig$. So it suffices to show that for 
$\hP_{(\sim\ps S)_V}=\bjoin_{\tilde V\in M_V}(\deo_{\tilde V,V}(\hat 1-\hP_{\ps S_{\tilde V}}))$, we have $\hP_{(\sim\ps S)_V}\join\hP_{\ps S_V}=\hat 1$ for all $V\in\VN$, and hence $\sim\ps S\join\ps S=\Sig$.

If $V$ is maximal, then $\hP_{(\sim\ps S)_V}=\deo_{V,V}(\hat 1-\hP_{\ps S_V})=\hat 1-\hP_{\ps S_V}$ and hence $\hP_{(\sim\ps S)_V}\join\hP_{\ps S_V}=\hat 1$. If $V$ is non-maximal and $\tilde V$ is any maximal context containing $V$, then $\hP_{(\sim\ps S)_V}\geq\hP_{(\sim\ps S)_{\tilde V}}$ and $\hP_{\ps S_V}\geq\hP_{\ps S_{\tilde V}}$, so $\hP_{(\sim\ps S)_V}\join\hP_{\ps S_V}\geq\hP_{(\sim\ps S)_{\tilde V}}\join\hP_{\ps S_{\tilde V}}=\hat 1$.
\end{proof}

For the double co-Heyting negation, we obtain
\begin{align}
			\hP_{(\sim\sim\ps S)_V} &=\bjoin_{\tilde V\in M_V}\deo_{\tilde V,V}(\hat 1-\hP_{(\sim\ps S)_{\tilde V}})\\
			&= \bjoin_{\tilde V\in M_V}\deo_{\tilde V,V}(\hat 1-\bjoin_{W\in M_{\tilde V}}\deo_{W,\tilde V}(\hat 1-\hP_{\ps S_W})).
\end{align}
Since $\tilde V$ is maximal, we have $M_{\tilde V}=\{\tilde V\}$, and the above expression simplifies to
\begin{align}
			\hP_{(\sim\sim\ps S)_V} &=\bjoin_{\tilde V\in M_V}\deo_{\tilde V,V}(\hat 1-(\hat 1-\hP_{\ps S_{\tilde V}}))\\
			&= \bjoin_{\tilde V\in M_{V}}\deo_{\tilde V,V}(\hP_{\ps S_{\tilde V}}).
\end{align}

Note that the fact that $\ps S$ is a subobject implies that 
\begin{equation}			\label{Eq_DoubleCoHeytNegBigger}
			\hP_{(\sim\sim\ps S)_V}\leq\hP_{\ps S_V}
\end{equation}
for all $V\in\VN$, so $\sim\sim\ps S\leq\ps S$ as expected. We have shown:

\begin{proposition}			\label{Prop_CoHeytReg}
An element $\ps S$ of $\Subcl{\Sig}$ is co-Heyting-regular, i.e., $\newblock{\sim\sim\ps S}=\ps S$, if and only if for all $V\in\VN$ it holds that
\begin{equation}
			\hP_{\ps S_V}=\bjoin_{\tilde V\in M_{V}}\deo_{\tilde V,V}(\hP_{\ps S_{\tilde V}}),
\end{equation}
where $M_V=\{\tilde V\supseteq V \mid \tilde V\text{ maximal}\}$.
\end{proposition}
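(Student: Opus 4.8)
The plan is to obtain the criterion directly from the stagewise identity for the double co-Heyting negation established just above the statement, namely
\begin{equation}
\hP_{(\sim\sim\ps S)_V}=\bjoin_{\tilde V\in M_{V}}\deo_{\tilde V,V}(\hP_{\ps S_{\tilde V}})\qquad(\forall V\in\VN),
\end{equation}
combined with the fact that the order on $\Subcl{\Sig}$ — and hence equality in it — is tested stagewise, and that at each $V$ the map $\alpha_V:\PV\ra\Cp(\Sig_V)$ of \eq{Eq_alphaV} is an order isomorphism. First I would record that for clopen subobjects $\ps S,\ps T$ one has $\ps S=\ps T$ iff $\ps S_V=\ps T_V$ for all $V$, iff $\hP_{\ps S_V}=\hP_{\ps T_V}$ for all $V$, since $\ps S_V=\alpha_V(\hP_{\ps S_V})$ by the very definition of $\hP_{\ps S_V}$ and $\alpha_V$ is a bijection. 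Taking $\ps T=\;\sim\sim\ps S$ and substituting the displayed formula then yields precisely the asserted equivalence: $\sim\sim\ps S=\ps S$ holds if and only if $\hP_{\ps S_V}=\bjoin_{\tilde V\in M_{V}}\deo_{\tilde V,V}(\hP_{\ps S_{\tilde V}})$ for every $V\in\VN$.

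The only substantive point is the displayed double-negation formula itself. I would obtain it by applying Prop.~\ref{Prop_CoHeytingNegLocal} twice — which is legitimate because $\sim\ps S$ is again an element of $\Subcl{\Sig}$, the co-Heyting structure being internal to $\Subcl{\Sig}$ — giving first $\hP_{(\sim\sim\ps S)_V}=\bjoin_{\tilde V\in M_V}\deo_{\tilde V,V}(\hat 1-\hP_{(\sim\ps S)_{\tilde V}})$ and then expanding $\hP_{(\sim\ps S)_{\tilde V}}$ via the same proposition. The key simplification is that each $\tilde V\in M_V$ is a \emph{maximal} context, so $M_{\tilde V}=\{\tilde V\}$ and $\deo_{\tilde V,\tilde V}=\on{id}$; hence $\hP_{(\sim\ps S)_{\tilde V}}=\hat 1-\hP_{\ps S_{\tilde V}}$, and the double complement $\hat 1-(\hat 1-\hP_{\ps S_{\tilde V}})=\hP_{\ps S_{\tilde V}}$ collapses the outer expression to $\bjoin_{\tilde V\in M_V}\deo_{\tilde V,V}(\hP_{\ps S_{\tilde V}})$. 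This is exactly the chain of equalities carried out immediately before the statement, so in the write-up I would simply cite it.

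As a consistency check, one of the two inequalities contained in the criterion is automatic: because $\ps S$ is a subobject, $\deo_{\tilde V,V}(\hP_{\ps S_{\tilde V}})=\hP_{\Sig(i_{V\tilde V})(\ps S_{\tilde V})}\leq\hP_{\ps S_V}$ for each $\tilde V\supseteq V$, whence $\hP_{(\sim\sim\ps S)_V}\leq\hP_{\ps S_V}$ — this is \eq{Eq_DoubleCoHeytNegBigger}, i.e.\ $\sim\sim\ps S\leq\ps S$ always. Thus the real content of co-Heyting-regularity is the reverse inequality $\hP_{\ps S_V}\leq\bjoin_{\tilde V\in M_V}\deo_{\tilde V,V}(\hP_{\ps S_{\tilde V}})$ at every stage, equivalently the stated equality. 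I do not expect a genuine obstacle here: the argument is bookkeeping. The only points demanding a little care are that iterating Prop.~\ref{Prop_CoHeytingNegLocal} is valid (it is stated for arbitrary $\ps S\in\Subcl{\Sig}$) and that passing between equality of clopen subsets and equality of projections at each stage is legitimate, which is guaranteed by $\alpha_V$ being an isomorphism of complete Boolean algebras independently of any distributivity or topology on the global object.
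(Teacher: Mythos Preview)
Your proposal is correct and follows essentially the same route as the paper: the paper derives the stagewise identity $\hP_{(\sim\sim\ps S)_V}=\bjoin_{\tilde V\in M_{V}}\deo_{\tilde V,V}(\hP_{\ps S_{\tilde V}})$ by applying Prop.~\ref{Prop_CoHeytingNegLocal} twice and using $M_{\tilde V}=\{\tilde V\}$ for maximal $\tilde V$, then states the proposition as an immediate consequence (``We have shown:''). You reproduce this derivation and make explicit the (trivial) passage from stagewise equality of projections via $\alpha_V$ to equality of clopen subobjects, which the paper leaves implicit.
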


\begin{proposition}
If $\ps S\in\Subcl{\Sig}$ is tight, then $\sim\sim\ps S=\ps S$, i.e., tight subobjects are co-Heyting regular.
\end{proposition}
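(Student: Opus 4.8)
The plan is to reduce everything to the characterisation of co-Heyting regularity already established in Proposition \ref{Prop_CoHeytReg}: an element $\ps S\in\Subcl{\Sig}$ satisfies $\sim\sim\ps S=\ps S$ if and only if
\begin{equation}
			\hP_{\ps S_V}=\bjoin_{\tilde V\in M_V}\deo_{\tilde V,V}(\hP_{\ps S_{\tilde V}})
\end{equation}
for every $V\in\VN$, where $M_V=\{\tilde V\supseteq V \mid \tilde V\text{ maximal}\}$. Thus the whole task is to verify this identity for a tight $\ps S$.

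First I would invoke the earlier characterisation of tightness, namely that $\ps S$ is tight precisely when $\hP_{\ps S_{V'}}=\deo_{V,V'}(\hP_{\ps S_V})$ for all $V'\subseteq V$. Applying this in the special case where the larger algebra is a maximal context $\tilde V\in M_V$ and the smaller one is $V$ itself yields $\deo_{\tilde V,V}(\hP_{\ps S_{\tilde V}})=\hP_{\ps S_V}$ for each individual $\tilde V\in M_V$. Hence every term in the join on the right-hand side of the criterion equals $\hP_{\ps S_V}$, so the join collapses to $\hP_{\ps S_V}$, which is exactly what is required; the identity of Proposition \ref{Prop_CoHeytReg} then gives $\sim\sim\ps S=\ps S$.

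The one point that genuinely needs to be addressed is that the index set $M_V$ is nonempty, i.e., that every context $V\in\VN$ is contained in at least one maximal abelian von Neumann subalgebra of $\N$; otherwise the empty join on the right would be $\hat 0$ and the argument would fail. I would justify this with the standard Zorn's lemma argument: the family of abelian von Neumann subalgebras of $\N$ containing $V$, ordered by inclusion, is closed under unions of chains (the von Neumann algebra generated by an increasing union of mutually commuting, hence commuting, abelian algebras is again abelian), so it has a maximal element. With that observation in place, the proof is a one-line substitution into the criterion of Proposition \ref{Prop_CoHeytReg}, so I do not anticipate any real obstacle beyond recording this nonemptiness remark cleanly.
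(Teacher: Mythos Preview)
Your argument is correct and is essentially identical to the paper's own proof: both invoke the tightness characterisation to get $\deo_{\tilde V,V}(\hP_{\ps S_{\tilde V}})=\hP_{\ps S_V}$ for each $\tilde V\in M_V$, collapse the join, and appeal to Proposition~\ref{Prop_CoHeytReg}. Your explicit remark that $M_V\neq\emptyset$ (via Zorn) is a point the paper silently assumes, so including it is a small improvement rather than a divergence.
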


\begin{proof}
If $\ps S$ is tight, then for all $V\in\VN$ and for all $\tilde V\in M_V$, we have $\hP_{\ps S_V}=\deo_{\tilde V,V}(\hP_{\ps S_{\tilde V}})$, so $\bjoin_{\tilde V\in M_V}\deo_{\tilde V,V}(\hP_{\ps S_{\tilde V}})=\hP_{\ps S_V}$. By Prop. \ref{Prop_CoHeytReg}, the result follows.
\end{proof}

\begin{corollary}
Outer daseinisation $\ps\deo:\PN\ra\Subcl{\Sig}$ maps projections into the co-Heyting-regular elements of $\Subcl{\Sig}$.
\end{corollary}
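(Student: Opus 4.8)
The plan is to reduce the statement to two facts proved immediately above: that every clopen subobject of the form $\ps\deo(\hP)$ is tight, and that tight clopen subobjects are co-Heyting-regular. The proof is then a one-line composition of these, and below I sketch why the first input holds in this case.

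First I would recall from Definition \ref{Def_OuterDas} that the component projections of $\ps\deo(\hP)$ are $\hP_{\ps\deo(\hP)_V}=\deo_{\N,V}(\hP)$ for each $V\in\VN$. To verify tightness, fix $V',V\in\VN$ with $V'\subseteq V$. Since $V'\subseteq V\subseteq\N$ is a chain of von Neumann algebras sharing the unit $\hat 1$, the composition law for the $\deo$-maps (each being the left adjoint of the corresponding inclusion of projection lattices, so composites of left adjoints are left adjoints) gives $\deo_{\N,V'}=\deo_{V,V'}\circ\deo_{\N,V}$. Evaluating at $\hP$ yields
\begin{equation}
			\hP_{\ps\deo(\hP)_{V'}}=\deo_{\N,V'}(\hP)=\deo_{V,V'}(\deo_{\N,V}(\hP))=\deo_{V,V'}(\hP_{\ps\deo(\hP)_V}),
\end{equation}
which is precisely the characterisation of tightness obtained above from the key formula $\hP_{\Sig(i_{V'V})(\ps S_V)}=\deo_{V,V'}(\hP_{\ps S_V})$. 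Hence $\ps\deo(\hP)$ is tight.

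Then I would simply invoke the Proposition stating that tight subobjects are co-Heyting-regular (equivalently, apply Prop. \ref{Prop_CoHeytReg} directly, since for a tight $\ps S$ one has $\bjoin_{\tilde V\in M_V}\deo_{\tilde V,V}(\hP_{\ps S_{\tilde V}})=\hP_{\ps S_V}$): taking $\ps S=\ps\deo(\hP)$ gives $\sim\sim\ps\deo(\hP)=\ps\deo(\hP)$, which is the claim. As this is a direct corollary of results established on the same page, there is no genuine obstacle; the only point requiring any care is the composition identity $\deo_{\N,V'}=\deo_{V,V'}\circ\deo_{\N,V}$, and this is forced by the adjunction characterisation of the $\deo$-maps. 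I would also note that this mirrors exactly the earlier corollary that $\ps\deo$ lands in the Heyting-regular elements, the two statements together reflecting the fact that daseinised projections are stable under both double negations.
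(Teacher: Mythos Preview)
Your proposal is correct and follows essentially the same route as the paper: the corollary is immediate from the earlier remark that every $\ps\deo(\hP)$ is tight together with the preceding Proposition that tight subobjects are co-Heyting-regular. You supply more detail than the paper (in particular, spelling out the composition identity $\deo_{\N,V'}=\deo_{V,V'}\circ\deo_{\N,V}$), but the structure of the argument is identical.
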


\textbf{Physical interpretation.} We conclude this section by giving a tentative physical interpretation of the two kinds of negation. For this interpretation, it is important to think of an element $\ps S\in\Subcl{\Sig}$ as a collection of local propositions $\ps S_V$ (resp. $\hP_{\ps S_V}$), one for each context $V$. Moreover, if $V'\subset V$, then the local proposition represented by $\ps S_{V'}$ is coarser than the local proposition represented by $\ps S_V$.

Let $\ps S\in\Subcl{\Sig}$ be a clopen subobject, and let $\neg\ps S$ be its Heyting complement. As shown in Prop. \ref{Prop_HeytingNegLocal}, the local expression for components of $\neg\ps S$ is given by
\begin{equation}
			\hP_{(\neg\ps S)_V}=\hat 1-\bjoin_{V'\in m_V}\hP_{\ps S_{V'}},
\end{equation}
where $m_V$ is the set of all minimal contexts contained in $V$. The projection $\hP_{(\neg\ps S)_V}$ is always smaller than or equal to $\hat 1-\hP_{\ps S_V}$, since $\hP_{\ps S_{V'}}\geq\hP_{\ps S_V}$ for all $V'\in m_V$. For the Heyting negation of the local proposition in the context $V$, represented by $\ps S_V$ or equivalently by the projection $\hP_{\ps S_V}$, one has to consider all the coarse-grainings of this proposition to minimal contexts (which are the `maximal' coarse-grainings). The Heyting complement $\neg\ps S$ is determined at each stage $V$ as the complement of the join of all the coarse-grainings $\hP_{\ps S_{V'}}$ of $\hP_{\ps S_V}$.

In other words, the component of the Heyting complement $\neg\ps S$ at $V$ is not simply the complement of $\ps S_V$, but the complement of the disjunction of all the coarse-grainings of this local proposition to all smaller contexts. The coarse-grainings of $\ps S_V$ are specified by the clopen subobject $\ps S$ itself.

The component of the co-Heyting complement $\sim\ps S$ at a context $V$ is given by
\begin{equation}
						\hP_{(\sim\ps S)_V}=\bjoin_{\tilde V\in M_V}(\deo_{\tilde V,V}(\hat 1-\hat P_{\ps S_{\tilde V}})),
\end{equation}
where $M_V$ is the set of maximal contexts containing $V$. The projection $\hP_{(\sim\ps S)_V}$ is always larger than or equal to $\hat 1-\hP_{\ps S_V}$, as was argued in the proof of Prop. \ref{Prop_CoHeytingNegLocal}. This means that the co-Heyting complement $\sim\ps S$ has a component $(\sim\ps S)_V$ at $V$ that may overlap with the component $\ps S_V$, hence the corresponding local propositions are not mutually exclusive in general. Instead, $\hP_{(\sim\ps S)_V}$ is the disjunction of all the coarse-grainings of complements of (finer, i.e., stronger) local propositions at contexts $\tilde V\supset V$.

The co-Heyting negation hence gives local propositions that for each context $V$ take into account all those contexts $\tilde V$ from which one can coarse-grain to $V$. The component $(\sim\ps S)_V$ is defined in such a way that all the stronger local propositions at maximal contexts $\tilde V\supset V$ are complemented in the usual sense, i.e., $\hP_{(\sim\ps S)_{\tilde V}}=\hat 1-\hP_{\ps S_{\tilde V}}$ for all maximal contexts $\tilde V$. At smaller contexts $V$, we have some coarse-grained local proposition, represented by $\hP_{(\sim\ps S)_V}$, that will in general not be disjoint from (i.e., mutually exclusive with) the local proposition represented by $\hP_{\ps S_V}$.

\section{Conclusion and outlook}			\label{Sec_Conclusion}
Summing up, we have shown that to each quantum system described by a von Neumann algebra $\N$ of physical quantities one can associate a (generalised) quantum state space, the spectral presheaf $\Sig$, together with a complete bi-Heyting algebra $\Subcl{\Sig}$ of clopen subobjects. Elements $\ps S$ can be interpreted as families of local propositions, where `local' refers to contextuality: each component $\ps S_V$ of a clopen subobject represents a proposition about the value of a physical quantity in the context (i.e., abelian von Neumann subalgebra) $V$ of $\N$. Since $\ps S$ is a subobject, there is a built-in form of coarse-graining which guarantees that if $V'\subset V$ is a smaller context, then the local proposition represented by $\ps S_{V'}$ is coarser than the proposition represented by $\ps S_V$.

The map called outer daseinisation of projections (see Def. \ref{Def_OuterDas}) is a convenient bridge between the usual Hilbert space formalism and the new topos-based form of quantum logic. Daseinisation maps a propositions of the form ``$\Ain\De$'', represented by a projection $\hP$ in the complete orthomodular lattice $\PN$ of projections in the von Neumann algebra $\N$, to an element $\ps\deo(\hP)$ of the bi-Heyting algebra $\Subcl{\Sig}$. 

We characterised the two forms of negation arising from the Heyting and the co-Heyting structure on $\Subcl{\Sig}$ by giving concrete stagewise expressions (see Props. \ref{Prop_HeytingNegLocal} and \ref{Prop_CoHeytingNegLocal}), considered double negation and characterised Heyting regular elements of $\Subcl{\Sig}$ (Prop. \ref{Prop_HeytingReg}) as well as co-Heyting regular elements (Prop. \ref{Prop_CoHeytReg}). It turns out that daseinisation maps projections into Heyting regular and co-Heyting regular elements of the bi-Heyting algebra of clopen subobjects.

The main thrust of this article is to replace the standard algebraic representation of quantum logic in projection lattices of von Neumann algebras by a better behaved form based on bi-Heyting algebras. Instead of having a non-distributive orthomodular lattice of projections, which comes with a host of well-known conceptual and interpretational problems, one can consider a complete bi-Heyting algebra of propositions. In particular, this provides a distributive form of quantum logic. Roughly speaking, a non-distributive lattice with an orthocomplement has been traded for a distributive one with two different negations.

We conclude by giving some open problems for further study:
\begin{itemize}
	\item [(a)] It will be interesting to see how far the constructions presented in this article can be generalised beyond the case of von Neumann algebras. A generalisation to complete orthomodular lattices is immediate, but more general structures used in the study of quantum logic(s) remain to be considered. 
	\item [(b)] Bi-Heyting algebras are related to bitopological spaces, see \cite{BBGK10} and references therein. But the spectral presheaf $\Sig$ is not a topological (or bitopological) space in the usual sense. Rather, it is a presheaf which has no global elements. Hence, there is no direct notion of points available, which makes it impossible to define of a set underlying the topology (or topologies). Generalised notions of topology like frames will be useful to study the connections with bitopological spaces. 
	\item [(c)] All the arguments given in this article are topos-external. There is an internal analogue of the bi-Heyting algebra $\Subcl{\Sig}$ in the form of the power object $P\ps O$ of the so-called outer presheaf, see \cite{DI12}, so one can study many aspects internally in the topos $\SetVNop$ associated with the quantum system. This also provides the means to go beyond propositional logic to predicate logic, since each topos possesses an internal higher-order intuitionistic logic.
\end{itemize}

\vspace{0.5cm}

\textbf{Acknowledgements.} I am very grateful to the ASL, and to Reed Solomon, Valentina Harizanov and Jennifer Chubb personally, for giving me the opportunity to organise a Special Session on ``Logic and Foundations of Physics'' for the 2010 North American Meeting of the ASL, Washington D.C., March 17--20, 2010. I would like to thank Chris Isham and Rui Soares Barbosa for discussions and support. Many thanks to Dan Marsden, who read the manuscript at an early stage and made some valuable comments and suggestions. The anonymous referee also provided some very useful suggestions, which I incorporated. Finally, Dominique Lambert's recent talk at \textsl{Categories and Physics 2011} at Paris 7 served as an eye-opener on paraconsistent logic (and made me lose my fear of contradictions ;-) ).

\end{document}